\newtheorem{theorem}{Theorem}
\newtheorem{lemma}{Lemma}
\newtheorem{definition}{Definition}
\newtheorem{corollary}{Corollary}
\newtheorem{example}{Example}
\newtheorem{remark}{Remark}
\newtheorem{construction}{Construction}
\newcommand{\tabcaption}{\def\@captype{table}\caption}
\newcommand{\tabincell}[2]{\begin{tabular}{@{}#1@{}}#2\end{tabular}}
\begin{document}
\title{A New Construction Structure on Coded Caching with Linear Subpacketization:  Non-Half-Sum Disjoint Packing}
	\author{Minquan Cheng,~\IEEEmembership{Member,~IEEE,} Huimei Wei,~\IEEEmembership{Student Member,~IEEE,} 		
		Kai Wan,~\IEEEmembership{Member,~IEEE,}  and Giuseppe Caire,~\IEEEmembership{Fellow,~IEEE}
\thanks{
A short version of this paper   was submitted to  the 2025 IEEE International Symposium on Information Theory. 
}
\thanks{M. Cheng and H. Wei are with Guangxi Key Lab of Multi-source Information Mining $\&$ Security, Guangxi Normal University,
	Guilin 541004, China (e-mail:  chengqinshi@hotmail.com, scarlett\_w\_edu@163.com). The work of M.~Cheng was in part supported by 2022GXNSA035087, 2024GXNSFGA010001,  and the BAGUI Young Scholar Program of Guangxi.}
\thanks{K.~Wan is with the School of Electronic Information and Communications,
	Huazhong University of Science and Technology, 430074  Wuhan, China,  (e-mail: kai\_wan@hust.edu.cn). The work of K.~Wan was partially funded by the   National Natural
Science Foundation of China (NSFC-12141107),  the Key Research and Development Program of Wuhan under Grant 2024050702030100, and Wuhan ``Chen Guang''
Pragram under Grant 2024040801020211}
\thanks{G. Caire is with the Electrical Engineering and Computer Science Department, Technische Universit\"{a}t Berlin,10587 Berlin, Germany (e-mail: caire@tu-berlin.de). The work of G.~Caire was partially funded by the Gottfried Wilhelm Leibniz-Preis 2021 of the German Science Foundation (DFG).}
}
\date{}
\maketitle
\begin{abstract}
Coded caching is a promising technique to effectively reduce peak traffic by using local caches and the multicast gains generated by these local caches. Coded caching schemes have been widely investigated following the seminal work of Maddah-Ali and Niesen. Explicit coding constructions have been proposed for a variety of network topologies with information-theoretically optimal or near-optimal transmission load $R$. An important parameter in these constructions is the subpacketization $F$, i.e., the number of subpackets that each content file needs to be divided into. In particular, the original scheme of Maddah-Ali and Niesen as well as several other variants require $F$ to grow exponentially with the number of users $K$. In practice, files have finite size and too large $F$ yields impractically small subpackets. Therefore, it is important to design coded caching schemes with $F$ and $R$ as small as possible. At present, the few known schemes with subpacketization linear in $K$ achieve large load. In this paper, we consider the linear scaling regime $F = O(K)$ and design schemes with unprecedented transmission load $R$. 
 Specifically, we first introduce a new combinatorial structure called non-half-sum disjoint packing (NHSDP) which can be used to generate a coded caching scheme with $K=O(F)$. A class of new schemes is then obtained by constructing NHSDP. Theoretical and numerical analyses demonstrate that (i) in comparison to existing schemes with linear subpacketization, 
the proposed scheme achieves a lower load; (ii) the proposed scheme also attains a lower load than some existing schemes with polynomial subpacketization in some cases; and (iii) the proposed scheme achieves load values comparable to those of existing schemes with exponential subpacketization in some cases. Furthermore, the newly introduced concept of NHSDP is closely related to classical combinatorial structures, including cyclic difference packings (CDP), non-three-term arithmetic progressions (NTAP), and perfect hash families (PHF). These relationships underscore the significance of NHSDP as an important combinatorial structure in the field of combinatorial design. 
 
\end{abstract}

\begin{IEEEkeywords}
Coded caching, placement delivery array,  linear subpacketization, transmission load, non-half-sum disjoint packing
\end{IEEEkeywords}
\section{INTRODUCTION}
The rise of internet connected devices in recent years has significantly increased network traffic, fueled by activities such as multimedia streaming, web browsing, and social networking. In addition, the high temporal variability of this traffic leads to congestion during peak periods and inefficient use of network resources during off-peak times. Caching is a promising technique to reduce peak traffic by taking advantage of memories distributed across the network to duplicate content during off-peak times. Conventional uncoded caching techniques focus on predicting the user demands to make appropriate prefetching decisions, thus realizing a ``local caching gain", which scales with the amount of local memory \cite{BGW}. In their seminal paper \cite{MN}, Maddah-Ali and Niesen (MN) demonstrated that, in addition to a local caching gain, coded caching can also attain a ``global caching gain". This global caching gain  scales with the global amount of memory in the network, since each transmission of the MN scheme can serve multiple users  simultaneously.

 Following \cite{MN}, the basic coded caching problem consists of the following setup: a central server has $N$ equal-size files and it is connected to $K$ users via a shared (broadcast) link of normalized capacity, which is able to transmit one file per unit of time. Each user has a local cache memory of capacity equivalent to $M$ files. An $F$-division $(K,M,N)$ coded caching scheme consists of two phases: the placement phase during the off-peak hours and the  delivery phase during the peak hours. In the placement phase, the server divides each file into $F$ equal-size packets and places some of these packets into the users’ caches without knowledge of the users' future demands. If packets are directly placed in the user's caches, the scheme is referred to as ``uncoded placement". If the server places more general functions of the packets in the users’ caches, the scheme is referred to as ``coded placement". The parameter $F$ is referred to as the subpacketization order of the scheme (in brief, subpacketization).

In the delivery phase, each user requests an arbitrary file. According to the users' demands and the cache contents, the server broadcasts some coded packets such that each user can recover its desired file. The normalized amount of transmission for the worst-case over all possible demands is called the transmission load (or load) $R$. In particular, a scheme achieving load $R$ can serve all users’ demands by transmitting at most $R$ times the size of one file. Following \cite{MN}, we define the caching gain as the reduction factor in the load with respect to sending 
 sequentially all demanded files. For example, in the case of $K\leq N$ and distinct demands, a system without caching needs to send all $K$ files through the shared link. A system with traditional uncoded caching (e.g., using prefix caching, where a fraction $M/N$ of each file is cached) requires the transmission of $(1-M/N)K$ files. A coded caching system achieving load $R$ achieves a ``coded caching gain" $K(1 - M/N)/R\geq 1$ over the system without caches and the system with conventional uncoded caching.

The first well-known coded caching scheme was proposed by Maddah-Ali and Niesen in \cite{MN}, which utilized a combinatorial design in the placement phase and  linear coding in the delivery phase. This scheme is referred  to as the MN scheme. The authors in \cite{WTP2016,JCLC,YMA2018,WTP2020} separately proved the minimality of the load of the MN scheme under the uncoded  placement and for the case $K \leq N$. Using the placement strategy of the MN scheme, the authors in \cite{YMA2018} proposed an exact optimal caching scheme, which strictly improves upon the current state of the art by exploiting commonalities among user demands under uncoded placement.

However, the subpacketization $F={K\choose KM/N}$ of the MN scheme increases exponentially with the number of users $K$. This would become infeasible for practical implementation when $K$ is large. So it is meaningful to design a coded caching scheme with low subpacketization. The grouping method, which is widely regarded as the most effective in reducing subpacketization, was proposed in \cite{SJTLD,CJWY}. Nevertheless, the load of the scheme generated by grouping method increases fast. Recently, the authors in \cite{YCTC} proposed an interesting combinatorial structure called placement delivery array (PDA) to study coded caching schemes with low subpacketization. They also showed that the MN scheme is equivalent to a special PDA which is called MN PDA. A large amount of articles have used the PDA approach to devise low subpacketizaton coded caching schemes (e.g., \cite{CJWY, YCTC,CJYT,CJTY,CWZW,WCWG,WCWL,WCCLS,ZCW,LC,PKB,WCLC,AST}).

There are other  characterizations of coded caching schemes, such as using linear block codes \cite{TR}, the special $(6,3)$-free hypergraphs \cite{SZG}, the $(r,t)$ Ruzsa-Szem\'{e}redi graphs\cite{STD}, the strong edge coloring of bipartite graphs\cite{YTCC}, cross resolvable designs\cite{KMR}, the projective geometry \cite{CKSM}, and combinatorial designs \cite{ASK}, which are listed in Table~\ref{tab-Schemes}. 
In this table, we do not include the schemes from \cite{STD} and the first scheme in \cite{XXGL}, since we focus on explicit constructions, while theirs only focus on the existence of constructions and the user number approximates an infinite integer.
\begin{table}[http!]
\renewcommand{\arraystretch}{2}
\setlength\tabcolsep{1pt} 
\centering
\caption{The existing coded caching schemes where $K,k,t,m,H,a,z,r \in \mathbb{Z}^{+}$, $\left[k \atop t \right]_q=\frac{(q^{k}-1)\dots(q^{k-t+1}-1)}{(q^{t}-1)\dots(q-1)}$, $\left \langle K \right \rangle_{t}=K \mod t$.}
\label{tab-Schemes}
\begin{tabular}{|c|c|c|c|c|c|}\hline
Schemes & Number of Users & Memory ratio  & Load    & Subpacketization &Constraint\\ \hline

MN Scheme  \cite{MN}&$K$               & $\frac{t}{K}$& $\frac{K-t}{t+1}$& ${K\choose t}$         &   \\ \hline
\tabincell{c}{WCLC  \cite{WCLC}}&$\binom{m}{z}k^z$               & $1-\left(\frac{k-t}{k}\right)^{z}$& $(\frac{k-t}{\lfloor\frac{k-1}{k-t}\rfloor})^z$ & ${\lfloor\frac{k-1}{k-t}\rfloor}^zk^{m-1}$         &\tabincell{c}{$1\leq t < k$,\\[-0.5cm] $1\leq z \leq m$}    \\ \hline


YTCC  \cite{YTCC}&$\binom{H}{a}$& $1-\frac{\binom{a}{r}\binom{H-a}{z-r}}{\binom{H}{z}}$
&\tabincell{c}{$\frac{\binom{H}{a+z-2r}}{\binom{H}{z}}\cdot$ \\[-0.5cm] $\min\{\binom{H-a-z+2r}{a-r},$\\[-0.5cm] $\binom{a+z-2r}{a-r}\}$}
& $\binom{H}{z}$&\tabincell{c}{$r< a < H,$\\[-0.5cm] $r<z<H,$\\[-0.5cm]$a+z \leq H+r$}          \\ \hline

WCCLS  \cite{WCCLS}  &$q^m$ &$1-\frac{\binom{m}{w}(q-1)^w}{q^m}$ &$\frac{\binom{m}{w}(q-1)^w}{q^{m-w}}$&$q^m$&$m,w \in \mathbb{Z}^{+},m<w$
\\ \hline

\tabincell{c}{CKSM  1 \cite{CKSM}}
&\tabincell{c}{$\frac{1}{t!}q^{\frac{t(t-1)}{2}}\cdot$\\[-0.5cm]$\prod \limits_{i=0}^{t-1}\left[k-i \atop 1 \right]_q$}
&\tabincell{c}{$1-q^{mt}\cdot$ \\[-0.5cm] $\prod \limits_{i=0}^{m-1}\frac{\left[k-t-i \atop 1\right]_q}{\left[k-i \atop 1\right]_q}$}& \tabincell{c}{$\frac{m!q^{mt}}{(m+t)!}q^{\frac{t(t-1)}{2}}\cdot$\\[-0.5cm] $\prod \limits_{i=0}^{t-1}\left[k-m-i \atop 1\right]_q$}&
\tabincell{c}{$\frac{1}{m!}q^{\frac{m(m-1)}{2}}\cdot$\\[-0.5cm]$\prod \limits_{i=0}^{m-1}\left[k-i \atop 1 \right]_q$}&
\tabincell{c}{$m+t \leq k$,\\[-0.5cm] prime power}\\ \cline{1-5}

\tabincell{c}{CKSM  2 \cite{CKSM}}&
$\left[ k \atop t\right]_q$&$1-\frac{\left[k-t \atop m \right]_q}{\left[k \atop m+t\right]_q}$                  &$\frac{\left[k \atop m\right]_q}{\left[k \atop m+t\right]_q}$      &$\left[k \atop m+t\right]_q$&$2\leq q$ \\ \hline

\tabincell{c}{ASK  1 \cite{ASK}}&
$q^2+q+1 $&$\frac{q^2}{q^2+q+1}$&$1$&$q^2+q+1$& prime power     \\  \cline{1-5}
\tabincell{c}{ASK  2   \cite{ASK} }
&$q^2$       & $\frac{q-1}{q}$              &$\frac{q}{q+1}$      &$q^2+q$   & $2\leq q$         \\ \hline

\tabincell{c}{ZCW  \cite{ZCW} } &$2^{m}$ &$1-\frac{\binom{m}{\omega}}{\sum_{i=0}^{\omega}\binom{m}{i}}$  &$\frac{\binom{m}{\omega}2^{m-\omega}}{\sum_{i=0}^{\omega}\binom{m}{i}}$ & $\sum_{i=0}^{\omega}\binom{m}{i}$  &  $\omega <  m$ \\ \hline

& & &$\frac{(K-t)(K-t+1)}{2K}$ &$K$ &\tabincell{c}{ $(K-t+1)|K$ \\[-0.5cm]or $K-t=1$} \\ \cline{4-6}
WCWL  \cite{WCWL} & \tabincell{c}{$K$}  &$\frac{t}{K}$  &$\frac{K-t}{2\lfloor \frac{K}{K-t+1}\rfloor+1}$ &$\left(2\lfloor \frac{K}{K-t+1}\rfloor+1\right)K$ &\tabincell{c}{$\left \langle K \right \rangle_{K-t+1}=K-t$ } \\ \cline{4-6}
&   & &$\frac{K-t}{2\left\lfloor \frac{K}{K-t+1}\right\rfloor}$ &$2\left\lfloor \frac{K}{K-t+1}\right\rfloor K$ &\tabincell{c}{ $\mbox{otherwise}$ } \\ \hline
XXGL  \cite{XXGL} &$K$ &$\frac{K-2}{K}$ &$\frac{K-1}{K}$ &$K$ & $K \in \mathbb{N}^{+}$ \\ \hline
AST  \cite{AST} &$2^{r}k$  &$1-\frac{r+1}{2^{r}}+\frac{r}{2^{r}k}$ &$\frac{k(r+1)-r}{2^{r}}$  &$2^{r}k=K$   &$r,k \in \mathbb{N}^{+}$ \\ \hline
MR  \cite{MR}  &$K$  &$\frac{t}{K}$  &\small$\left\lceil\frac{K(K-t)}{2+\left \lfloor\frac{t}{K-t+1}\right \rfloor+\left \lfloor\frac{t-1}{K-t+1}\right\rfloor}\right\rceil \cdot \frac{1}{K}$ &$K$  & \\ \hline
\end{tabular}
\end{table}

From Table~\ref{tab-Schemes}, we notice that the schemes in \cite{CJWY,MN,YCTC,CJYT,TR,SZG,WCWG,WCLC,CWZW}  have a flexible number of users, large memory regimes, and a small load, but a large subpacketization which increases exponentially (or sub-exponentially) with the number of users; the schemes in \cite{YTCC,ZCW,WCWL,SS,AST,WCCLS,CKSM,ASK,XXGL,MR} have a low subpacketization which increases polynomially or linearly with the number of users. However, the schemes in \cite{CKSM,YTCC,WCCLS} are restricted to the special number of users (i.e., combinations, powers, or products of combinations and powers) and special memory ratio (i.e., the ratios of these combinations, powers, or products of combinations and powers); the schemes in \cite{ZCW,ASK,XXGL} have a large memory ratio close to $1$; the schemes in \cite{AST,CKSM} have  memory ratios close to $0$ or $1$. The schemes in \cite{MR,WCWL} have a large subpacketization increasing linearly with the number of users for the flexible number of users and large memory regimes. It is worth noting that the schemes in \cite{MR,WCWL} use the same placement strategy, namely, the so-called consecutive cyclic uncoded placement. The authors showed that under the consecutive cyclic uncoded placement and clique-covering delivery, the maximum coded caching gain is $2\lfloor\frac{K}{K-KM/N+1}\rfloor+1$. Clearly, when $M/N$ is small such that $M/N\leq 1/2$,  linear coded caching schemes with one-short delivery can achieve a coded caching gain of at most $3$.

\subsection{Contribution}
In this paper, we focus on constructing coded caching schemes with linear subpacketization. When $K$ is odd, we introduce a new combinatorial concept called non-half-sum disjoint packing (NHSDP), which can realize a coded caching scheme with linear subpacketization. Compared to the existing characterizing methods, the greatest advantage of NHSDP is that it integrates the placement strategy and the transmission strategy into a single condition, i.e., the second condition of NHSDP in Definition \ref{def-NHSDP}. The main result can be summarized as follows. 

\begin{itemize}
\item  When $K$ is odd, we transform the coded caching construction  to a new combinatorial structure named non-half-sum disjoint packing (NHSDP). Given a $(v,g,b)$ NHSDP for any odd positive integers $v$, $g$, and $b$, we can obtain a $(K=v,M,N)$ coded caching scheme with the memory ratio $\frac{M}{N}=1-\frac{bg}{v}$,  subpacketization $F=K$, and transmission load $R=b$. If $K$ is even, we can add one virtual user to the system such that the effective number of users is $K'=K+1$, and then construct a scheme based on the $(K’+1, g, b)$ NHSDP.  
\item By constructing an NHSDP, we can obtain a $(K=v,M,N)$ coded caching scheme with the memory ratio $M/N=1-\frac{2^n\prod_{i=1}^{n}m_i}{v}$, the coded caching gain $g=2^n$, and the transmission load $R=\prod_{i=1}^{n}m_i$ for any odd integer  $v\geq 2\sum_{i=1}^{n-1}(m_i\prod_{j=i+1}^{n}(1+2m_{i}))+ m_{n}+1$  where $n$ and $m_1$, $m_2$, $\ldots$, $m_n$ are any positive integers. In particular, when $m_1=\cdots=m_n=\lfloor\frac{v^{1/n}-1}{2}\rfloor$, we obtain a $(K=v,M,N)$ coded caching scheme with the memory ratio $\frac{M}{N}=1-\frac{2^n\lfloor\frac{v^{1/n}-1}{2}\rfloor^n}{v}$, subpacketization $F=K$, coded caching gain $g=2^n$, and  transmission load $R={\left\lfloor \frac{v^{1/n}-1}{2} \right\rfloor}^n$.

\item  Theoretical and numerical comparisons show that our proposed scheme achieves a lower load compared to the existing schemes in \cite{AST,WCWL,ZCW,XXGL} with linear subpacketization;  the proposed scheme achieves a lower load than some existing schemes in \cite{CKSM,YTCC} with polynomial subpacketization  under some system parameters; compared to some existing schemes in  \cite{MN,WCLC} with exponential subpacketization, our scheme incurs only a slight load increase for some system parameters. 
\item The new concept of NHSDP has a close relationship to other classic combinatorial structures, such as cyclic difference packing (CDP) \cite{dm/Yin98}, non-three-term arithmetic progressions (NTAP) \cite{brown1982density},  and perfect hash family (PHF) \cite{NM1996}, etc. Specifically, a CDP can be used to construct a $(v,g,1)$ NHSDP. A $(v,g,1)$ NHSDP is equivalent to an NTAP set over $[v]$ and it can be used to construct a $(3:gv,v,3)$ PHF.  In cryptography \cite{NM1996}, combinatorics \cite{JS2011}, and coding theory \cite{CG2016},  we aim to construct a subset of $\mathbb{Z}_n$ satisfying NTAP, with the maximum cardinality, and a $(3:m,v,3)$ PHF with the maximum value of $m$. 
When $v=3^n$, we can obtain a class of  NTAPs using our new NHSDP whose size is larger than the state-of-the-art  achievable bound on the NTAP size in \cite{elsholtz2024improving} when $n\leq 52$.
The PHF derived from the proposed NHSDP has more columns compared to  the first quadrics PHF\cite{Hara-PHF}. Furthermore, the number of columns in the resulting PHF  is  close to that of the Hermitian PHF presented in \cite{Hara-PHF}.

\end{itemize}

\subsection{Organizations and notations}
The rest of this paper is organized as follows. In Section~\ref{sec-perlimin}, a coded caching system, the PDA, and their relationship are introduced. In Section~\ref{sec-NHSDP}, we introduce the non-half-sum disjoint packing (NHSDP) and show that it can be used to generate a PDA. In Section~\ref{sec-Construct-NHSDP}, we propose a new class of PDAs by constructing an NHSDP. In  Section~\ref{sec-perf-ana}, we provide the performance analysis of the proposed scheme, and we expand the application of NHSDP in Section~\ref{sec-expanding}. Finally, we conclude this paper in Section~\ref{sec-conclu}.  

\textit{Notation:} In this paper, we will use the following notations. Let bold capital letter, bold lowercase letter, and curlicue letter  denote array, vector, and set respectively; let $|\mathcal{A}|$ denote the cardinality of the set $\mathcal{A}$; define  $[a]=\{1,2,\ldots,a\}$ and $[a:b]$ is the set $\{a,a+1,\dots,b-1,b\}$; $\lfloor a\rfloor $ denotes the largest integer not greater than $a$. $a \nmid b$ indicates that $a$ does not divide $b$. We define that $\left[k \atop t \right]_q=\frac{(q^{k}-1)\dots(q^{k-t+1}-1)}{(q^{t}-1)\dots(q-1)}$, $\left \langle K \right \rangle_{t}=K \mod t$;  $\mathbb{Z}_v$ is the ring of integer residues modulo $v$.

\section{PRELIMINARIES}\label{sec-perlimin}
In this section, we will review a  coded caching system, placement delivery array, and their relationship.
\subsection{Coded caching system}\label{sec:system mode}
A $(K,M,N)$ coded caching system  illustrated in Figure~\ref{system model} contains a server storing  $N$ equal-sized files $\mathcal{W}=\{W_n\ |\ n\in[N]\}$, $K$ users each of which can cache at most $M$ files where $0\leq M \leq N$. The server connects to the users over an error-free shared-link. An $F$-division $(K,M,N)$ coded caching scheme operates in two phases:
\begin{figure}[h]
\centering
\includegraphics[height=6cm,width=8cm]{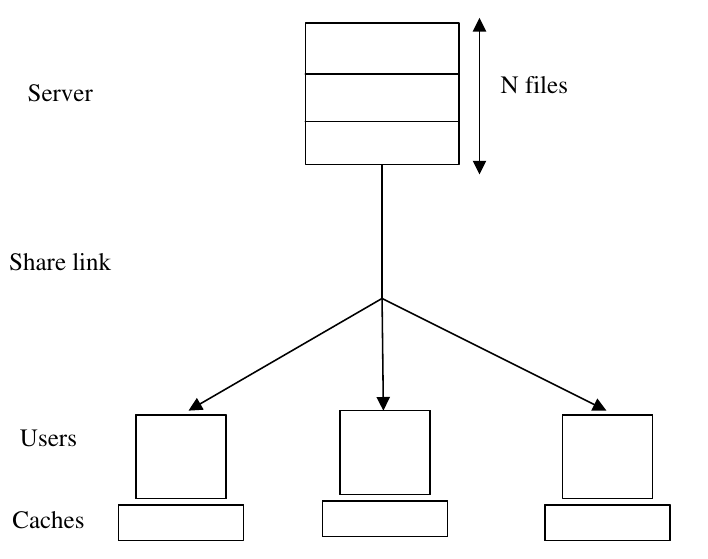}
\caption{$(K,M,N)$ caching system}
\label{system model}
\end{figure}
\begin{itemize}
\item {\bf Placement phase:} The server divides each file into $F$ equal-sized packets, i.e., $W_n=\{W_{n,j}\ |\ j\in[F]\}$ where $n\in[N]$, and places some packets into each user's cache without knowing any information about  future demands. Let $\mathcal{Z}_{k}$ be the packets cached by user $k$. This placement strategy is called uncoded placement, otherwise it is called coded placement. 

\item {\bf Delivery phase:} Each user requests one file from $\mathcal{W}$ randomly. Denote the requested file number by ${\bf d}$=$(d_{1},d_{2},\ldots,d_{K})$, i.e., user $k$ requests file $W_{d_{k}}$, where $k \in \mathcal{K},d_{k} \in [N]$. After receiving the request vector ${\bf d}$, the sever broadcasts XOR of coded packets with a size of at most $R_{\bf d}F$ to the users such that each user is able to decode its requested file.
\end{itemize}
In this paper, we focus on the normalized amount of transmission for the worst-case over all possible demands which is defined as follows.
\begin{equation}
\label{eq:def of load}
R=\max\{R_{\bf d}\ |\ {\bf d}\in[N]^K\}.
\end{equation}
The first well-known scheme was proposed in \cite{MN}, which is referred to as the MN scheme, has the minimum load under uncoded placement \cite{WTP2016,JCLC,YMA2018,WTP2020}, but the MN scheme has large subpacketization which increases exponentially with the number of users.

\subsection{Placement delivery array}
In order to study linear schemes with lower subpacketization, the authors in \cite{YCTC} proposed an interesting combinatorial structure called placement delivery array defined as follows.
\begin{definition}[\cite{YCTC}]\rm
\label{def-PDA}
For positive integers $K,F,Z$ and $S$, an $F \times K $ array $\mathbf{P}=(p_{j,k})$, where $j \in [F]$ and $k \in [K]$, composed of $"*"$ called star and $[S]$, is called a $(K,F,Z,S)$ placement delivery array (PDA) if the following conditions hold:
\begin{enumerate}
\item [C$1$.] Each column has exactly $Z$ stars.
\item [C$2$.] Each integer in $[S]$ occurs at least once.
\item [C$3$.] For any two distinct entries $p_{j_{1},k_{1}}$ and 
$p_{j_{2},k_{2}}$, $p_{j_{1},k_{1}} = p_{j_{2},k_{2}}=s$ if only if
\begin{enumerate}
\item [a.] $j_{1} \neq j_{2}, k_{1} \neq k_{2}$, i.e., they lie in distinct rows and distinct columns; 
\item [b.] $ p_{j_{1}, k_{2}}= p_{j_{2}, k_{1}}=*$, i.e., the corresponding $2 \times 2$ subarray formed by rows $j_{1}, j_{2}$ and columns  $k_{1}, k_{2}$ must be one of the following form,
\begin{align}\label{eq-form-pda}
\left(\begin{array}{cc}
s &$*$\\
* & s
\end{array}\right)\ \ \ \ \textrm{or}\ \ \ \ 
\left(\begin{array}{cc}
* & s\\
s &$*$
\end{array}\right).
\end{align}
\end{enumerate}
\end{enumerate}
\end{definition}%
A PDA is called $g$-regular if each integer occurs exactly $g$ times. For example, one can check that the following array $\mathbf{P}$ is a $2$-$(4,4,2,4)$ PDA.
\begin{eqnarray}\label{ex-pda}
\mathbf{P}=\left(
\begin{array}{ccccc}
*&1&*&4\\
1&*&2&*\\
*&2&*&3\\
4&*&3&*\\
\end{array}
\right).
\end{eqnarray} 
By Algorithm~\ref{alg-PDA}, for any given PDA, we can obtain the following result.
\begin{lemma}\rm(\cite{YCTC})\label{le-PDA}
Given a $(K,F,Z,S)$ PDA, there exists an $F$-division coded caching scheme for the $(K,M,N)$ coded caching system with memory ratio $\frac{M}{N}=\frac{Z}{F}$, subpacketization $F$, and load $R=\frac{S}{F}$.
\hfill $\square$
\end{lemma}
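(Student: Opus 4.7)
The plan is to verify that the PDA-based algorithm alluded to (Algorithm~\ref{alg-PDA}) actually produces a scheme with the three claimed parameters, by reading the placement and the delivery directly off the array $\mathbf{P}=(p_{j,k})$ and then invoking the three defining conditions C1--C3 in sequence. The placement I would specify is the natural one: for each $k\in[K]$ and $n\in[N]$, user $k$ caches the packet $W_{n,j}$ whenever $p_{j,k}=*$. By condition C1 each column contains exactly $Z$ stars, so every user caches $ZN$ packets out of $FN$, which gives $M/N=Z/F$ and subpacketization $F$ by construction.

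For the delivery, given a demand vector $\mathbf{d}=(d_1,\dots,d_K)$, I would have the server transmit, for each symbol $s\in[S]$, the single XOR
\begin{equation*}
X_s \;=\; \bigoplus_{(j,k):\, p_{j,k}=s} W_{d_k,\,j}.
\end{equation*}
Condition C2 guarantees that each $s\in[S]$ does appear, so there are exactly $S$ such transmissions, each of normalized size $1/F$. This yields load $R=S/F$, matching the target.

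The main obstacle, and the only substantive step, is proving correctness: every user $k$ must recover every packet $W_{d_k,j}$ for which $p_{j,k}\neq *$. Fix such a pair and set $s=p_{j,k}$. The term $W_{d_k,j}$ appears in $X_s$ by definition. Any other summand $W_{d_{k'},j'}$ in $X_s$ comes from an entry with $p_{j',k'}=s$ and $(j',k')\neq(j,k)$; by condition C3(a) we have $k'\neq k$ and $j'\neq j$, and by C3(b) we have $p_{j',k}=*$. This last equality is exactly what is needed: since $p_{j',k}=*$, user $k$ has cached $W_{n,j'}$ for every file index $n$, and in particular $W_{d_{k'},j'}$. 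Consequently user $k$ can cancel every other summand of $X_s$ from its cache and extract $W_{d_k,j}$. Repeating this over all rows $j$ with $p_{j,k}\neq *$ recovers all missing packets of $W_{d_k}$, so user $k$ obtains its requested file.

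The counting of $M/N=Z/F$, $F$, and $R=S/F$ is mechanical once C1 and C2 are in hand; the interesting content lies entirely in the decoding argument, which is a one-line consequence of the rigid $2\times 2$ sub-array shape \eqref{eq-form-pda} forced by C3. This also explains the design rationale of the PDA abstraction: C1 fixes the memory, C2 fixes the load, and C3 is exactly the algebraic condition required to make single-shot XOR delivery decodable.
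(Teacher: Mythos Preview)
Your proof is correct and follows exactly the approach the paper takes: the paper does not give a formal proof (the lemma is cited from \cite{YCTC}) but the ``in plain words'' paragraph immediately following the lemma, together with Algorithm~\ref{alg-PDA} and the worked example, spells out the same placement rule, the same per-$s$ XOR delivery, and the same decoding argument via C3(b). Your write-up is in fact a cleaner and more explicit version of that explanation.
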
In plain words, the relationship between a PDA and the corresponding coded caching scheme is as follows. Each $k$-th column of the PDA describes the cache configuration of the corresponding $k$-th user. Each file is split into $F$ packets. User $k$ caches the packets corresponding to the $*$ symbol in the $k$-th PDA column. Thus, the amount of data cached by user $k$ is $Z$ (the number of $*$ symbols) times the (normalized) size of one packet, which is $1/F$. This yields a memory ratio $M/N = Z/F$. Then, each integer $s \in [S]$ corresponds to the transmission of one linearly coded packet. At transmission step $s$, the severs takes all the packets of the demanded files marked by entries equal to $s$ in $\mathbf{P}$, and transmits the XOR. By condition C2 above, each XOR is formed by packets needed by (at least one) user, and present in the caches of all other users whose requests participate in the XOR. Therefore, each user obtains either $1$ or $0$ packet from each XOR. It takes at most $S$ rounds to satisfy all users' demands. In particular, there is at least one configuration of the user demands that requires $S$ rounds. The amount of transmitted data (normalized to the size of one file) is therefore $R = S/F$. 

\begin{algorithm}
\caption{Caching scheme based on PDA in \cite{YCTC}}\label{alg-PDA}
\begin{algorithmic}[1]
\Procedure {Placement}{$\mathbf{P}$, $\mathcal{W}$}
\State Split each file $W_n\in\mathcal{W}$ into $F$ packets, i.e., $W_{n}=\{W_{n,j}\ |\ j\in[F]\}$.
\For{$k\in [K]$}
\State $\mathcal{Z}_k\leftarrow\{W_{n,j}\ |\ p_{j,k}=*, n=[N],j\in [F]\}$
\EndFor
\EndProcedure
\Procedure{Delivery}{$\mathbf{P}, \mathcal{W},{\bf d}$}
\For{$s\in[S]$}
\State  Server sends $\bigoplus_{p_{j,k}=s,j\in[F],k\in[K]}W_{d_{k},j}$.
\EndFor
\EndProcedure
\end{algorithmic}
\end{algorithm} 
For example, using Algorithm~\ref{alg-PDA}, a $4$-division $(4,2,4)$ coded caching scheme based on $\mathbf{P}$ in \eqref{ex-pda} can be obtained as follows.
\begin{itemize}
\item {\bf Placement phase}: From Line $2$, we split each file into $4$ packets, i.e., $W_n=\{W_{n,j}\ |\ n\in [4],j\in[4]\}$. By Lines $3$-$5$, the caches for all of users are
\begin{align*}
\mathcal{Z}_1=\{W_{n,1},W_{n,3} \ | \ n \in [4]\},\ \ \  \mathcal{Z}_2=\{W_{n,2},W_{n,4} \ | \ n \in [4]\}, \\
\mathcal{Z}_3=\{W_{n,1},W_{n,3} \ |\  n \in [4]\},\ \ \
\mathcal{Z}_4=\{W_{n,2},W_{n,4} \ | \ n \in [4]\}.
\end{align*}
Clearly each user caches exactly $2\times 4=8$ packets, i.e., $M=2$ files. So we have $M/N=2/4=1/2$, i.e., $M/N=Z/F$.
\item {\bf Delivery phase}: Assume that the request vector is $\mathbf{d}=(1,2,3,4)$. By Lines $8$-$10$, the server transmits  $W_{1,2} \oplus W_{2,1}$ at the first time slot, $W_{2,3} \oplus W_{3,2}$ at the second time slot, $W_{3,4} \oplus W_{4,3}$ at the third time slot, and $ W_{1,4} \oplus W_{4,1}$ at the last time slot. It is not difficult to check that each user can decode its requested file. There are exactly $4$ coded signals transmitted by the server. From \eqref{eq:def of load} we have $R=4/4=1$, i.e., $R=S/F$.
\end{itemize}
There are many constructions based on  PDA~\cite{CJWY,CJYT,CJTY,CWZW,ZCJ,ZCW,YCTC,YTCC,SZG,MW}. The authors in \cite{YCTC} have shown that the MN scheme can also be represented by a PDA. Other constructions can also be  represented by appropriate PDAs such as the caching schemes based on  liner block code~\cite{TR}, projective space~\cite{CKSM}, and combinatorial designs~\cite{ASK}, etc.  We list their performances in Table~\ref{tab-Schemes}. In addition, given a PDA, we can obtain a PDA with any number of users by the following result.
\begin{lemma}[Grouping method\cite{CJWY}]\label{le-fundamental recursive}\rm
Given a $(K_1,F,Z,S)$ PDA, there exists a $(K,h_{1}F,h_{1}Z, hS)$ PDA for any $K>K_1$ where $h_1=\frac{K_{1}}{gcd(K_{1},K)}$ and $h=\frac{K}{gcd(K_{1},K)}$.
\end{lemma}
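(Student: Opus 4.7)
My plan is to prove Lemma~\ref{le-fundamental recursive} by explicitly constructing the target PDA $\mathbf{Q}$ from the given PDA $\mathbf{P}$. Set $d=\gcd(K_1,K)$, so that $K_1=h_1 d$, $K=h d$, $\gcd(h_1,h)=1$, and $h>h_1$ because $K>K_1$. I would build $\mathbf{Q}$ as an array of $h_1$ row-blocks, each of size $F\times K$; in the $i$-th block, each of the $K$ columns of $\mathbf{Q}$ is a copy of some column of $\mathbf{P}$ selected by a map $\phi_i:[K]\to[K_1]$, with the nonzero entries relabeled by an integer shift $\sigma_i(k)\cdot S$, where $\sigma_i(k)\in[0,h)$, so that altogether exactly $hS$ distinct labels appear in $\mathbf{Q}$.

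The maps $(\phi_i,\sigma_i)$ must be chosen so that three properties hold. First, for each fixed column $k\in[K]$ the values $\phi_1(k),\ldots,\phi_{h_1}(k)$ should be pairwise distinct columns of $\mathbf{P}$; since every column of $\mathbf{P}$ contains exactly $Z$ stars, this immediately gives $h_1 Z$ stars per column of $\mathbf{Q}$ (Condition C1). Second, as $(i,k)$ ranges over $[h_1]\times[K]$, the shift $\sigma_i(k)$ should sweep through all of $[0,h)$, which together with C2 for $\mathbf{P}$ forces all $hS$ labels to appear in $\mathbf{Q}$ (Condition C2). Third, for C3, whenever two entries of $\mathbf{Q}$ share a label $s'$ they must (i) share the same shift class, so that the underlying $\mathbf{P}$ entries share a common integer $s$; (ii) lie in distinct rows and distinct columns of $\mathbf{Q}$; and (iii) induce a $2\times 2$ subarray whose off-diagonal positions are stars. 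Part (iii) will follow from C3 of $\mathbf{P}$ precisely when the off-diagonal positions of the $2\times 2$ subarray in $\mathbf{Q}$ pull back to off-diagonal positions of the corresponding $2\times 2$ subarray in $\mathbf{P}$.

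The main obstacle is the simultaneous fulfilment of (ii) and (iii). A naive construction that takes $h$ horizontally adjacent, integer-shifted copies of $\mathbf{P}$ and then vertically folds them into $h_1$ row-blocks of $K$ columns each is insufficient: a small example with $K_1=2$, $K=3$ and the $(2,2,1,1)$ PDA shows that after folding, the off-diagonals of a same-label $2\times 2$ subarray can straddle two different copies of $\mathbf{P}$ in the intermediate array and need not be stars. The resolution uses the coprimality $\gcd(h_1,h)=1$: by defining $\phi_i$ and $\sigma_i$ via a CRT-style bijection $\mathbb{Z}_{h_1}\times\mathbb{Z}_h\leftrightarrow\mathbb{Z}_{h_1 h}$ (equivalently, by cyclically shifting the $K_1$ columns of $\mathbf{P}$ used in row-block $i$ by an amount that depends on both $i$ and the shift index), one can arrange that every same-label pair of entries in $\mathbf{Q}$ arises from a single implicit copy of $\mathbf{P}$, so that both the distinct-column condition and the off-diagonal-star condition in $\mathbf{Q}$ transfer directly from C3 in $\mathbf{P}$. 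Once this indexing is in place, C1 and C2 reduce to routine counting and the verification of C3 reduces to the coprimality-based alignment argument just described.
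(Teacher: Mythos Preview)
The paper does not prove this lemma; it is quoted from \cite{CJWY} without argument, so there is no in-paper proof to compare against. Judging your plan on its own merits: the block framework and the diagnosis of why the naive fold fails are both correct, but the proposed remedy is not yet a proof, and the most natural reading of your ``CRT-style bijection'' does not work.

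The gap is that two of your stated requirements are in tension. For C1 you ask that $\phi_1(k),\ldots,\phi_{h_1}(k)$ be pairwise distinct; this is in fact unnecessary (each column of $\mathbf{P}$ already has exactly $Z$ stars, so C1 holds regardless), and it conflicts with what C3-b actually needs. If two equal labels sit at $((i_1,j_1),k_1)$ and $((i_2,j_2),k_2)$ with $\phi_{i_\ell}(k_\ell)=\kappa_\ell$, then the off-diagonal entry at $((i_1,j_1),k_2)$ pulls back to $\mathbf{P}_{j_1,\phi_{i_1}(k_2)}$, whereas C3 of $\mathbf{P}$ only certifies $\mathbf{P}_{j_1,\kappa_2}=\ast$ with $\kappa_2=\phi_{i_2}(k_2)$. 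So you would need $\phi_{i_1}(k_2)=\phi_{i_2}(k_2)$, the opposite of ``distinct''. The clean way out would be to force $i_1=i_2$ whenever two labels agree, i.e.\ to make the shift class determine the row-block index; but a counting argument shows this forces $h_1\mid h$, which fails exactly when $K_1\nmid K$. Concretely, in your own $K_1=2$, $K=3$ example, applying the obvious CRT bijection $\mathbb{Z}_2\times\mathbb{Z}_3\cong\mathbb{Z}_6$ still yields a $4\times 3$ array in which the two occurrences of one integer have a non-star off-diagonal. Thus the sentence ``once this indexing is in place\ldots'' hides the whole difficulty: you must exhibit the maps $(\phi_i,\sigma_i)$ explicitly and verify C3-b against them, and a single CRT identification of $\mathbb{Z}_{h_1}\times\mathbb{Z}_h$ with $\mathbb{Z}_{h_1h}$ is not enough.
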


\section{Non-half-sum disjoint packing}\label{sec-NHSDP}
Recall that Condition C$3$-a) of Definition~\ref{def-PDA} is called the Latin property in combinatorial design theory. Naturally, we can construct a PDA with $K=F$ by first constructing a Latin square and then putting stars in some integer entries such that the resulting array satisfies conditions C$1$, C$2$, and C$3$. We then obtain the desired PDA with linear subpacketization. Specifically, we use the classical and simple construction of the Latin square $\mathbf{L}=(l_{f,k})_{f,k\in \mathbb{Z}_v}$ where the entry $l_{f,k}=f+k$ for each $f$, $k\in \mathbb{Z}_v$. Recall that $\mathbb{Z}_v$ is a positive integer ring, i.e., the arithmetic operations in the ring
$\mathbb{Z}_v$. When we cyclically replace the integer entries by stars, condition C$3$ of Definition~\ref{def-PDA} is transformed into the following novel combinatorial structure for the integer entries of the first row of the Latin square.
\begin{definition}[Non-half-sum disjoint packing, NHSDP]\label{def-NHSDP}\rm
For any positive odd integer $v$, a pair $(\mathbb{Z}_v,\mathfrak{D})$ where $\mathfrak{D}$ consists of $b$ $g$-subsets of $\mathbb{Z}_v$ is called $(v,g,b)$ non-half-sum disjoint packing if it satisfies the following conditions.
\begin{itemize}
\item  The intersection of any two different elements in $\mathfrak{D}$ is empty;  
\item  For  each $\mathcal{D}\in \mathfrak{D}$,  the half-sum of any two different elements in $\mathcal{D}$ (i.e., the sum of the two elements divided by $2$\footnote{The operations are in $\mathbb{Z}_v$ and since $v$ is odd, then $2$ has an inverse, that is, $1/2$ in $\mathbb{Z}_v$ is also an element of $\mathbb{Z}_v$}) does not appear in any block of $\mathfrak{D}$.\footnote{\label{foot:block}In an NHSDP $(\mathbb{Z}_v,\mathfrak{D})$, each element of $\mathfrak{D}$ is called block.}  
\end{itemize} \hfill $\square$
\end{definition}

 Let us take the following example with $v=15$ to further explain the concept of NHSDP. 
\begin{example}\rm
\label{exam-v-NHSDP}Consider the following $(v=15, g=4, b=2)$ NHSDP 
\begin{align}\label{eq-ex-d1}
\mathfrak{D}= \{\mathcal{D}_1=\{-1,1,-2,2\}=\{14,1,13,2\},\ \mathcal{D}_2=\{-4,4,-5,5\}=\{11,4,10,5\}\}.
\end{align}Clearly $\mathcal{D}_1\cap\mathcal{D}_2=\emptyset$, i.e., the first condition of Definition~\ref{def-NHSDP} holds. 
The half-sums of any two different elements in each block in $\mathfrak{D}$ are as follows, 
\begin{equation}
\begin{aligned}
& {\rm in}\  \mathcal{D}_1:  \frac{-1+1}{2}=0,\  
\frac{-1-2}{2}=6,\ 
\frac{-1+2}{2}=-7,\  
\frac{1-2}{2}=7, \ 
\frac{1+2}{2}=-6,\  
\frac{-2+2}{2}=0,\\
&{\rm in}\  \mathcal{D}_2: \frac{-4+4}{2}=0,\  
\frac{-4-5}{2}=3,\ 
\frac{-4+5}{2}=-7,\  
\frac{4-5}{2}=7, \ 
\frac{4+5}{2}=-3,\  
\frac{-5+5}{2}=0.
\label{eq:ex D1 D2}
\end{aligned}
\end{equation}We have $\{0,\pm 3,\pm 6, \pm 7\}\cap\mathcal{D}_1=\{0,\pm 3,\pm 6, \pm 7\}\cap\mathcal{D}_2=\emptyset$, i.e., the second condition of Definition~\ref{def-NHSDP} holds. So $(\mathbb{Z}_{15},\mathfrak{D})$ is a $(15,4,2)$ NHSDP.\hfill $\square$
\end{example}

Using an NHSDP, we can obtain a PDA with $F=K$ by the following novel construction. 
\begin{construction}\rm\label{cons-1}
Given a $(v,g,b)$ NHSDP $(\mathbb{Z}_v,\mathfrak{D})$, then a $v\times v$ array $\mathbf{P}=(p_{f,k})_{f,k\in \mathbb{Z}_v}$ is defined in the following way
\begin{equation}\label{eq-cons-1}
p_{f,k}=\begin{cases}
(f+k,i),& \mbox{if\ } k-f \in \mathcal{D}_i,\ \exists\ i\in[b];\\
\ \ \ \ \ *, &\mbox{otherwise}.
\end{cases}
\end{equation} \hfill $\square$
\end{construction}
Now let us take the NHSDP in Example~\ref{exam-v-NHSDP} to further illustrate Construction~\ref{cons-1}.
\begin{example}\rm
\label{exam-PDA-1}
When $v=15$, we have a $(15,4,2)$ NHSDP in Example~\ref{exam-v-NHSDP}.  By Construction~\ref{cons-1}, the following array can be obtained,

\begin{align}\label{eq-ex-P1}
\addtocounter{MaxMatrixCols}{10}
\setlength{\arraycolsep}{1.0pt}
\mathbf{P}={\small
\begin{pmatrix}
*&(1,1)&(2,1)&*&(4,2)&(5,2)&*&*&*&*&(10,2)&(11,2)&*&(13,1)&(14,1)\\
(1,1)&*&(3,1)&(4,1)&*&(6,2)&(7,2)&*&*&*&*&(12,2)&(13,2)&*&(0,1)\\
(2,1)&(3,1)&*&(5,1)&(6,1)&*&(8,2)&(9,2)&*&*&*&*&(14,2)&(0,2)&*\\
*&(4,1)&(5,1)&*&(7,1)&(8,1)&*&(10,2)&(11,2)&*&*&*&*&(1,2)&(2,2)\\
(4,2)&*&(6,1)&(7,1)&*&(9,1)&(10,1)&*&(12,2)&(13,2)&*&*&*&*&(3,2)\\
(5,2)&(6,2)&*&(8,1)&(9,1)&*&(11,1)&(12,1)&*&(14,2)&(0,2)&*&*&*&*\\
*&(7,2)&(8,2)&*&(10,1)&(11,1)&*&(13,1)&(14,1)&*&(1,2)&(2,2)&*&*&*\\
*&*&(9,2)&(10,2)&*&(12,1)&(13,1)&*&(0,1)&(1,1)&*&(3,2)&(4,2)&*&*\\
*&*&*&(11,2)&(12,2)&*&(14,1)&(0,1)&*&(2,1)&(3,1)&*&(5,2)&(6,2)&*\\
*&*&*&*&(13,2)&(14,2)&*&(1,1)&(2,1)&*&(4,1)&(5,1)&*&(7,2)&(8,2)\\
(10,2)&*&*&*&*&(0,2)&(1,2)&*&(3,1)&(4,1)&*&(6,1)&(7,1)&*&(9,2)\\
(11,2)&(12,2)&*&*&*&*&(2,2)&(3,2)&*&(5,1)&(6,1)&*&(8,1)&(9,1)&*\\
*&(13,2)&(14,2)&*&*&*&*&(4,2)&(5,2)&*&(7,1)&(8,1)&*&(10,1)&(11,1)\\
(13,1)&*&(0,2)&(1,2)&*&*&*&*&(6,2)&(7,2)&*&(9,1)&(10,1)&*&(12,1)\\
(14,1)&(0,1)&*&(2,2)&(3,2)&*&*&*&*&(8,2)&(9,2)&*&(11,1)&(12,1)&*
\end{pmatrix}}.
\end{align}
It is not difficult to check that each column of $\mathbf{P}$ has exactly $Z=v-bg=15-2\times4=7$ stars, there are exactly $S=30$ vectors  in  $\mathbf{P}$, and each vector occurs at most once in each row and each column. So the conditions C$1$, C$2$, and C$3$-a) of Definition~\ref{def-PDA} hold. Finally, let us consider condition C$3$-b). Let us first consider the entries $p_{1,0}=p_{0,1}=(1,1)$. We have $p_{1,1}=p_{0,0}=*$ which satisfies condition C$3$-b). The reason for this is as follows. When $f=1$, if $k=0$ we have $k-f=-1=14\in \mathcal{D}_1$ in \eqref{eq-ex-d1} and $1+0=1$, then we set $p_{1,0}=(1,1)$; if $k=1$, we have $k-f=0$, which does not appear in either $\mathcal{D}_1$ or $\mathcal{D}_2$. Thus, we set $p_{1,1}=*$. Similarly, we can check that all the vectors in $\mathbf{P}$ satisfy condition C$3$-b) of Definition~\ref{def-PDA}. Therefore, $\mathbf{P}$ is a $(15,15,7,30)$ PDA that can realize a coded caching scheme with the memory ratio $\frac{M}{N}=\frac{7}{15}$, the subpacketization $F=K=15$, and load $R=2$.
\hfill $\square$
\end{example} 

  In Example~\ref{exam-PDA-1}, each column of $\mathbf{P}$ has exactly $|\mathcal{D}_1|+|\mathcal{D}_2|=|\mathcal{D}_1\cup \mathcal{D}_2|=8$. This implies that  condition C$1$ of  Definition~\ref{def-PDA} is guaranteed  by the first condition of NHSDP. From \eqref{eq:ex D1 D2}, the half-sum set is  $\mathcal{H}=\{0,3,6,7,8,9,12\}$. We can see that in Example~\ref{exam-PDA-1}, the stars of $\mathbf{P}$ in \eqref{eq-ex-P1} are exactly the entries $p_{i,i+h}$ for each $i\in \mathbb{Z}_{15}$ and $h\in\mathcal{H}$. So the half-sums of all the blocks generates the star positions. This implies that condition C$3$-b) of Definition~\ref{def-PDA} is ensured by the second condition of NHSDP. Recall that our construction is based on the constructing a Latin square. So the condition C$3$-a) always holds.
  
   Then, for any parameters $v$, $g$ and $b$, if there exists a $(v,g,b)$ NHSDP by Construction~\ref{cons-1}, we can also obtain a PDA in the following theorem, whose proof is given in Appendix~\ref{sec:proof-th-1}.
\begin{theorem}[PDA via  NHSDP]\rm\label{th-1}
Given a $(v,g,b)$ NHSDP, we can obtain a $(v,v,v-bg,bv)$ PDA which realizes a $(K=v,M,N)$ coded caching scheme with memory ratio $\frac{M}{N}=1-\frac{bg}{v}$, coded caching gain $g$, subpacketization $F=v$, and  transmission load $R=b$.\hfill $\square$
\end{theorem}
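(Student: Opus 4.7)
The plan is to verify that the array $\mathbf{P}$ produced by Construction~\ref{cons-1} satisfies the four conditions of Definition~\ref{def-PDA} with the stated parameters $(K,F,Z,S)=(v,v,v-bg,bv)$, and then apply Lemma~\ref{le-PDA} to conclude. Throughout, I will exploit the fact that $v$ is odd so that $2$ is a unit in $\mathbb{Z}_v$, which makes ``half-sums'' and the equation $2x=y$ well-defined.

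First I would check conditions C$1$ and C$2$. For C$1$, fix a column $k$; an entry $p_{f,k}$ is non-star iff $k-f$ lies in $\bigcup_{i\in[b]}\mathcal{D}_i$, and by the first NHSDP axiom the blocks are pairwise disjoint with $|\mathcal{D}_i|=g$, giving exactly $bg$ non-stars and hence $Z=v-bg$ stars per column. For C$2$, note that the integer alphabet produced by \eqref{eq-cons-1} consists of pairs $(s,i)$ with $s\in\mathbb{Z}_v$ and $i\in[b]$, so $S=bv$; for any such pair and any $d\in\mathcal{D}_i$, the system $f+k=s,\ k-f=d$ has the unique solution $f=(s-d)/2,\ k=(s+d)/2$ in $\mathbb{Z}_v$, so $(s,i)$ does occur. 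Condition C$3$-a) is immediate from the Latin-square backbone: if $p_{j_1,k_1}=p_{j_2,k_2}=(s,i)$ share the same first coordinate $s$, then $j_1+k_1=j_2+k_2$, so equality of rows forces equality of columns and vice versa.

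The main obstacle, and the step that actually uses the defining property of an NHSDP, is C$3$-b). Suppose $p_{j_1,k_1}=p_{j_2,k_2}=(s,i)$ with $j_1\neq j_2$. Write $d_\ell=k_\ell-j_\ell\in\mathcal{D}_i$ for $\ell=1,2$. From $j_1+k_1=j_2+k_2=s$ combined with $k_\ell=j_\ell+d_\ell$, a short manipulation gives $j_1-j_2=(d_2-d_1)/2$, so $d_1\neq d_2$ (else $j_1=j_2$). Computing the off-diagonal positions yields
\begin{equation*}
k_2-j_1=(d_1+d_2)/2=k_1-j_2,
\end{equation*}
that is, both entries $p_{j_1,k_2}$ and $p_{j_2,k_1}$ are indexed by the half-sum of two distinct elements of the same block $\mathcal{D}_i$. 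By the second axiom of Definition~\ref{def-NHSDP}, this half-sum lies outside every block of $\mathfrak{D}$, so \eqref{eq-cons-1} assigns a star to both positions, matching the required form \eqref{eq-form-pda}.

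Having verified C$1$--C$3$, $\mathbf{P}$ is a $(v,v,v-bg,bv)$ PDA. Applying Lemma~\ref{le-PDA} with $F=v$, $Z=v-bg$, $S=bv$ yields a $(K=v,M,N)$ coded caching scheme with $M/N=Z/F=1-bg/v$, subpacketization $F=v$, and transmission load $R=S/F=b$. The coded caching gain $g$ follows by observing that every integer label in $\mathbf{P}$ appears exactly $g$ times: given $(s,i)$, the solutions to $f+k=s,\ k-f\in\mathcal{D}_i$ are in bijection with $\mathcal{D}_i$, so the PDA is $g$-regular, meaning each XOR in Algorithm~\ref{alg-PDA} serves exactly $g$ users simultaneously.
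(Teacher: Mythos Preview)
Your proof is correct and follows essentially the same approach as the paper's proof in Appendix~\ref{sec:proof-th-1}: verify C$1$--C$3$ for the array in Construction~\ref{cons-1}, then invoke Lemma~\ref{le-PDA}. The only cosmetic difference is in C$3$-b), where you compute directly that $k_2-j_1=k_1-j_2=(d_1+d_2)/2$ and conclude these entries are stars, whereas the paper argues by contradiction (assuming $p_{f_1,k_2}$ is a non-star with label $d_3\in\mathcal{D}_{i'}$ and deriving $2d_3=d_1+d_2$); the underlying computation and use of the NHSDP half-sum axiom are identical.
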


\begin{remark}\rm
\label{re-even-case}  When $K$ is even, we can add a virtual user into the coded caching system, making the efficient number of users $K+1$ which can be solved by the  $(K+1,g,b)$ NHSDP in Theorem~\ref{th-1}. 

\end{remark}

\section{Constructions of PDAs via NHSDPs}
\label{sec-Construct-NHSDP}
By Theorem~\ref{th-1}, in order to obtain a coded caching scheme with linear subpacketization we should study NHSDPs. In this section we will propose a new construction of NHSDPs. 

We first introduce the main idea of our new construction of NHSDPs. By embedding integers into high-dimensional geometric spaces, we aim to transform the NHSDP into studying a geometric problem. Specifically, we will design a subset of $\mathbb{Z}_v$, say $\mathcal{X}$, to construct an NHSDP $(\mathbb{Z}_v,\mathfrak{D})$ such that each integer in $\mathfrak{D}$ can be uniquely represented by $\mathcal{X}$, ensuring that the first condition of Definition~\ref{def-NHSDP} holds.  Furthermore,  the half-sum generated by any two different integers in the same block of $\mathfrak{D}$ can also be represented by $\mathcal{X}$, and the coefficient sets for these two representations are disjoint. This approach can naturally distinguish the elements in $\mathfrak{D}$ from their half-sums perfectly which implies that  the second condition of Definition~\ref{def-NHSDP} is guaranteed. By finding an appropriate subset $\mathcal{X}$, we obtain the following main construction in this paper.
\begin{construction}\rm
\label{constr-2}
For any $n$ positive integers $m_1$, $m_2$, $\ldots$, $m_n$, let $\mathcal{A}:=[m_1]\times [m_2]\times \cdots\times[m_n]$. Define the following recursive function
	\begin{align}
		\label{eq-RF}
		\left\{
		\begin{array}{cc} 
			f(1)=m_1,\ \ \ \ \ \ \ \ \ \ \ \ \ \ \ \ \ \ \ \ \ \ \ \ &\ \text{if}\ i=1;\\
			f(i)=m_i\left(2\sum_{j=1}^{i-1}f(j)+1\right), &\ \text{if}\ i\geq 2,
		\end{array} 
		\right.
	\end{align}and $\mathcal{X}:=\{x_i=\frac{f(i)}{m_i} | i\in [n]\}$.  We can construct a family $\mathfrak{D}=\{\mathcal{D}_{\bf a}\ |\ {\bf a}=(a_1,a_2,\ldots,a_n)\in \mathcal{A}\}$ where $\mathcal{D}_{\bf a}$ is defined as  
\begin{align}
\label{eq-family}
\mathcal{D}_{\bf a}=\left\{\alpha_1 a_1x_1+\alpha_2 a_2x_2+\cdots+\alpha_n a_nx_n\ \Big|\  \alpha_i\in\{-1,1\},i\in[n]\right\},
\end{align}for each vector ${\bf a}=(a_1,a_2,\ldots,a_n)\in \mathcal{A}$.  By the above construction, $\mathfrak{D}$ has $m_1 \times m_2 \times \cdots \times m_n$ blocks, and each block $\mathcal{D}_{\bf a}$ has $2^n$ integers. 
\hfill $\square$ 
\end{construction}
 
Let us take the following example to illustrate Construction~\ref{constr-2}.
\begin{example}\rm\label{exam-n=3}
When $n=3$, $m_1=m_2=m_3=2$, we have $\mathcal{A}=[2]^3$. From \eqref{eq-RF}, when $i=1$, $2$, and $3$ we have 
\begin{align*}
&f(1)=m_1=2,\ \  f(2)=m_2(2f(1)+1)=2(4+1)=10,\\ 
&f(3)=m_3(2f(1)+2f(2)+1)=2(4+20+1)=50,
\end{align*}respectively. Then we have the subset 
\begin{align*}
\mathcal{X}=\left\{x_1=\frac{f_1}{m_1}=\frac{2}{2}=1,x_2=\frac{f(2)}{m_2}=\frac{10}{2}=5,x_3=\frac{f(3)}{m_3}=\frac{50}{2}=25\right\}.
\end{align*}
Now let us consider $\mathfrak{D}$ in Construction~\ref{constr-2}. When ${\bf a}=(1,1,1)$ and $(x_1,x_2,x_3)=(1,5,25)$, from \eqref{eq-family}, then we have 
\begin{align*}
\mathcal{D}_{(1,1,1)}=&\{\alpha_1 a_1x_1+\alpha_2 a_2x_2+\alpha_3 a_3x_3\ |\ \alpha_1,\alpha_2,\alpha_3\in\{-1,1\}\}\\
=&\{1 \cdot \alpha_1+5 \cdot \alpha_2+25 \cdot \alpha_3 \ |\ \alpha_1,\alpha_2,\alpha_3\in\{-1,1\} \}\\
=&\{31,21,29,19,-19,-29,-21,-31
\}.
\end{align*}For instance, the integer $31$ in $\mathcal{D}_{(1,1,1)}$ can be obtained by $1+5+25=31$. The number of all  possible $(\alpha_1, \alpha_2, \alpha_3)$ is $2^3=8$; thus the block $\mathcal{D}_{(1,1,1)}$ contains $8$ different integers. Similarly, we can obtain the following blocks.

\begin{equation}\label{eq-points}
	\begin{split}
		\mathfrak{D}=\{&\mathcal{D}_{(1,1,1)}=\{31,21,29,19,-19,-29,-21,-31
		\},\\
		&\mathcal{D}_{(2,1,1)}=\{32,22,28,18,-18,-28,-22,-32
		\},\\ 
		&\mathcal{D}_{(1,2,1)}=\{36,16,34,14,-14,-34,-16,-36
		\},\\
	&	\mathcal{D}_{(2,2,1)}=\{37,17,33,13,-13,-33,-17,-37
		\},\\ 
		&\mathcal{D}_{(1,1,2)}=\{56,46,54,44,-44,-54,-46,-56
		\},\\
	&	\mathcal{D}_{(2,1,2)}=\{57,47,53,43,-43,-53,-47,-57
		\},\\ 
		&\mathcal{D}_{(1,2,2)}=\{61,41,59,39,-39,-59,-41,-61
		\},\\
	&	\mathcal{D}_{(2,2,2)}=\{62,42,58,38,-38,-58,-42,-62
		\}\}.
\end{split}
\end{equation} 
By the above equation, $\mathfrak{D}$ has $m_1m_2m_3=8$ blocks and the intersection of any blocks is empty. The minimum and  maximum values of $\mathfrak{D}$ are $-62$ and $62$, respectively. To ensure that these two sets $[-62:0]$ and $[1:62]$ do  not have overlap in $\mathbb{Z}_v$, the value of $v$ must satisfy $v \geq 2\times 62+1=125$, since $0$ is also in $\mathbb{Z}_{v}$.  Next, let we verify that $(\mathbb{Z}_{125},\mathfrak{D})$ is an $(125,8,8)$ NHSDP. Now let we consider the half-sum of $31$ and $21$, i.e., $\frac{31+21}{2}=26$, which is not in any block of $\mathfrak{D}$. Similarly we can check  the half-sum of any two different integers in each block of $\mathfrak{D}$.

The intuition on why the proposed construction leads to a  NHSDP is summarized as follows.
\begin{itemize}
\item  In order to ensure that the first condition of NHSDP, i.e., the intersection of any two blocks is empty, each integer in $\mathfrak{D}$ is uniquely represented by a linear combination of all elements in $\mathcal{X}$, and the coefficients of $(x_1,x_2,x_3)$ are $(a_1\alpha_1, a_2\alpha_2, a_3\alpha_3)\in\mathcal{C}= \{\pm 1,\pm 2\}^3$, since $(a_1,a_2,a_3)\in [2]^3$ and $(\alpha_1, \alpha_2, \alpha_3) \in \{-1,+1\}^3$.  
By the above structure of linear combinations and by  the choice of $\{x_i:i\in [3]\}$,  each block $\mathcal{D}_{\mathbf{a}}$ where $\mathbf{a} \in [2]^3$ has $2^3=8$ integers and any two different blocks $\mathcal{D}_{\mathbf{a}_1}$  and $\mathcal{D}_{\mathbf{a}_2}$ do not have any overlap. 
\item  We then focus on the second condition of NHSDP, i.e.,  the half-sum of any two different elements in one block  does
not appear in any block of $\mathfrak{D}$.  For any two different integers in $\mathcal{D}_{{\bf a}}$, say $$d=\alpha_1a_1x_1+\alpha_2a_2x_2+\alpha_3a_3x_3\ \ \text{and}  \ \ d'=\alpha'_1a_1x_1+\alpha'_2a_2x_2+\alpha'_3a_3x_3,$$ where $(\alpha'_1,\alpha'_2,\alpha'_3)\in\{-1,1\}^3$ and $(\alpha_1,\alpha_2,\alpha_3)\neq (\alpha'_1,\alpha'_2,\alpha'_3)$, we have their half-sum 
$$\frac{d+d'}{2}=\frac{\alpha_1+\alpha'_1}{2}a_1x_1+\frac{\alpha_2+\alpha'_2}{2}a_2x_2+\frac{\alpha_3+\alpha'_3}{2}a_3x_3.$$ Recall that $(\alpha_1,\alpha_2,\alpha_3)\neq (\alpha'_1,\alpha'_2,\alpha'_3)$ there is at least one item of $\frac{\alpha_1+\alpha'_1}{2}$, $\frac{\alpha_2+\alpha'_2}{2}$ and $\frac{\alpha_3+\alpha'_3}{2}$ equal to $0$. This implies that each half-sum can only be represented by a linear combination of all elements in $\mathcal{X}$ with   coefficients in $\mathcal{C}'=[-2:2]^3\setminus\mathcal{C}$.   Hence, $\mathcal{C}\cap \mathcal{C}'=\emptyset$,  which implies that each half-sum does not occur in $\mathcal{D}$.
\end{itemize}

By Theorem~\ref{th-1} we have a $(125,125,61,1000)$ PDA which generates a $125$-division $(125,M,N)$ coded caching scheme with $M/N=\frac{61}{125}$, coded caching gain $g=8$, and the transmission load $b=8$. Now let us consider the existing schemes with $K=125$.
\begin{itemize}
\item  The WCWL scheme (see Table~\ref{tab-Schemes}): When $K=125$ and $t=61$ in \cite{WCWL} we have a WCWL scheme with $K=125$, the subpacketization $F=125$, and the transmission load $\frac{K-t}{2\lfloor \frac{K}{K-t+1}\rfloor}=\frac{64}{3}\approx21.333>8$. Clearly our scheme has a lower transmission load while maintaining a lower memory ratio and the same subpacketization;
\item  The CWZW scheme (see Table~\ref{tab-Schemes}): In the case of $q=2$ and $m=7$, the scheme  in \cite{CWZW} includes the schemes in \cite{YCTC,SZG,TR}. So we only need consider the CWZW scheme. First we can obtain a CWZW scheme with $K=16$, the memory ratio $M/N=0.5>\frac{61}{125}$, the subpacketization $F'=2^7=128>125$, and the transmission load $1$. Using the grouping method in Lemma~\ref{le-fundamental recursive}, we have a grouping CWZW scheme with $K=125$, $M/N=0.5>\frac{61}{125}$, the subpacketization $F_1=\frac{16}{\text{gcd}(125,16)}\cdot F'=2048>125$, and the transmission load $\frac{125}{16}=7.8125$. We can see that our scheme has  a lower  memory ratio and a smaller subpacketization than the scheme in \cite{CWZW}, albeit with a slight increase in the transmission load;
\item The MN scheme (see Table~\ref{tab-Schemes}): An exhaustive computer search shows that we can obtain a $(125,M,N)$ grouping MN coded caching scheme with $M/N=0.5$, an appropriate subpacketization ${10\choose 5}=252$, and the transmission load $R=\frac{5}{6}\times\frac{125}{10}=\frac{125}{12}>10>8$ by using a $(10,M,N)$ MN scheme based on the grouping method in Lemma~\ref{le-fundamental recursive}. It is evident that all the memory ratio, the subpacketization, and the transmission load are larger than our proposed scheme.
\end{itemize}

\hfill $\square$ 
\end{example} 
By Construction~\ref{constr-2}, for any  $n$ and $m_1$, $m_2$, $\ldots$, $m_n$ are positive integers, suppose the vector $\mathbf{a}=(a_1,a_2,\ldots,a_n)=(m_1,m_2,\ldots,m_n) \in \mathcal{A}$. The corresponding block  $\mathcal{D}_{\mathbf{a}}$ is given by  $$\mathcal{D}_{\mathbf{a}}=\{\alpha_1 f(1)+\alpha_2 f(2)+\cdots+\alpha_n f(n)\  |\  \alpha_i \in \{-1,1\}, i \in [n]\},$$
  which contains the integers   $-f(1)-f(2)-\dots-f(n)$ and $f(1)+f(2)+\ldots+f(n)$. To ensure that  $[-f(1)-f(2)-\dots-f(n):0]$ and $[1:f(1)+f(2)+\ldots+f(n)]$ can not overlap in $\mathbb{Z}_{v}$, the value of $v$ must satisfy $$v\geq 2(f(1)+f(2)+\ldots+f(n))+1.$$ 
For the ease of further notations, we define that
\begin{align}
\phi(m_1,m_2,\ldots,m_n):=&\sum_{i=1}^{n}f(i)=\sum_{i=1}^{n-1}f(i)+m_n\left(\sum_{j=1}^{n-1}2f(j)+1\right)
=(1+2m_n)\sum_{i=1}^{n-1}f(i)+m_n\nonumber\\
=&(1+2m_n)(1+2m_{n-1})\sum_{i=1}^{n-2}f(i)+m_{n-1}(1+2m_n)+m_n\nonumber\\
=&\prod_{i=n-2}^{n}(1+2m_i)\sum_{i=1}^{n-3}f(i)+m_{n-2}\prod_{i=n-1}^{n}(1+2m_i)+m_{n-1}(1+2m_n)+m_n\nonumber\\
=&\sum_{i=1}^{n-1}\left(m_i\prod_{j=i+1}^{n}(1+2m_{i})\right)+ m_{n}.
\label{eq-sum-half}
\end{align}
From \eqref{eq-sum-half} when $v\geq 2\phi(m_1,m_2,\ldots,m_n)+1$ we have the following lemma, whose proof is given in Appendix~\ref{sec:proof of thm2}.
\begin{lemma}\rm
\label{th-main-2}
For any $n$ and $m_1$, $m_2$, $\ldots$, $m_n$  are positive integers and for any odd positive integer $v\geq 2\phi(m_1,m_2,\ldots,m_n)+1$ defined in \eqref{eq-sum-half}, the pair $(\mathbb{Z}_{v},\mathfrak{D})$ generated in Construction~\ref{constr-2} is a $(v,2^n,\prod_{i=1}^{n}m_i)$ NHSDP.
\hfill $\square$ 
\end{lemma}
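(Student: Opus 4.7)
\textbf{Proof plan for Lemma~\ref{th-main-2}.}

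My plan is to verify directly the two defining properties of an NHSDP in Definition~\ref{def-NHSDP} by exploiting a \emph{unique representation} property of integers of the form $\sum_{i=1}^n c_i x_i$ in $\mathbb{Z}_v$. The whole argument rests on one technical lemma that I would prove first: if $c_i \in [-2m_i, 2m_i]$ for every $i\in[n]$, then $\sum_{i=1}^n c_i x_i = 0$ in $\mathbb{Z}$ forces $c_i = 0$ for all $i$. To prove it, I note that from \eqref{eq-RF} we have $x_i = f(i)/m_i = 2\sum_{j<i} f(j) + 1$, so
\[
|c_n x_n| \;\ge\; x_n \;=\; 2\sum_{j<n} f(j) + 1 \;>\; 2\sum_{j<n} f(j) \;=\; \sum_{j<n} 2m_j x_j \;\ge\; \Bigl|\sum_{j<n} c_j x_j\Bigr|,
\]
whenever $c_n\neq 0$, so the top coefficient must vanish and the result follows by a downward induction on the index. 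Since the maximum magnitude of $\sum c_i x_i$ over such coefficients is at most $2\sum_i f(i) = 2\phi(m_1,\ldots,m_n) < v$, the uniqueness statement also holds in $\mathbb{Z}_v$.

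Next, I would use this lemma to establish the first NHSDP condition (empty intersection and block size $2^n$). Consider two pairs $(\mathbf{a},\alpha),(\mathbf{b},\beta) \in \mathcal{A}\times\{-1,+1\}^n$ producing the same element, i.e.\ $\sum_i \alpha_i a_i x_i \equiv \sum_i \beta_i b_i x_i \pmod v$. The coefficients of the difference satisfy $\alpha_i a_i - \beta_i b_i \in [-2m_i,2m_i]$, so the lemma forces $\alpha_i a_i = \beta_i b_i$ for every $i$. Since $a_i,b_i\in[m_i]$ are strictly positive, this yields $a_i = b_i$ and $\alpha_i = \beta_i$. This simultaneously shows that each block $\mathcal{D}_{\mathbf{a}}$ has exactly $2^n$ distinct integers, and that $\mathcal{D}_{\mathbf{a}} \cap \mathcal{D}_{\mathbf{b}} = \emptyset$ whenever $\mathbf{a} \neq \mathbf{b}$.

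For the second NHSDP condition, take any two distinct $d,d' \in \mathcal{D}_{\mathbf{a}}$, written as $d = \sum_i \alpha_i a_i x_i$ and $d' = \sum_i \alpha'_i a_i x_i$ with $\alpha \neq \alpha'$. Their half-sum in $\mathbb{Z}_v$ is $\sum_i \tfrac{\alpha_i + \alpha'_i}{2} a_i x_i$, well-defined since $v$ is odd. Assume for contradiction that this half-sum coincides with some $\sum_i \beta_i b_i x_i \in \mathcal{D}_{\mathbf{b}}$. The coefficient differences $\tfrac{\alpha_i+\alpha'_i}{2}a_i - \beta_i b_i$ again lie in $[-2m_i,2m_i]$, and the uniqueness lemma forces each of them to vanish. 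But since $\alpha \neq \alpha'$, there is some index $i^*$ with $\alpha_{i^*} \neq \alpha'_{i^*}$, giving $\tfrac{\alpha_{i^*}+\alpha'_{i^*}}{2} = 0$; this would require $\beta_{i^*} b_{i^*} = 0$, which is impossible since $\beta_{i^*}\in\{\pm1\}$ and $b_{i^*}\in[m_{i^*}]$ is strictly positive. The contradiction shows the half-sum lies in no block of $\mathfrak{D}$, completing the verification.

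The main obstacle, and the only place where real work is needed, is proving the uniqueness lemma with the right magnitude bookkeeping. The choice of $x_i = f(i)/m_i$ in Construction~\ref{constr-2} is precisely calibrated so that the top term $c_n x_n$ strictly dominates the contribution of the lower terms even when the lower $c_j$ attain their extreme value $\pm 2m_j$; this is exactly why the recursion in \eqref{eq-RF} carries the factor $(2\sum_{j<i}f(j)+1)$ rather than merely $\sum_{j<i} f(j)$. Once the lemma is established, both NHSDP conditions follow from essentially the same algebraic manipulation, and the hypothesis $v \ge 2\phi(m_1,\ldots,m_n)+1$ is used in exactly one place: to guarantee that sums of admissible coefficients that are non-zero in $\mathbb{Z}$ remain non-zero in $\mathbb{Z}_v$.
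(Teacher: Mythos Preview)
Your proposal is correct and follows essentially the same approach as the paper's proof in Appendix~\ref{sec:proof of thm2}: both exploit the top-down dominance $x_{i+1}>2\sum_{j\le i}m_jx_j$ to force coefficient-by-coefficient equality, and both derive the non-half-sum condition from the fact that some coefficient $\tfrac{\alpha_{i^*}+\alpha'_{i^*}}{2}$ vanishes. Your version is somewhat cleaner in that you isolate the key fact as a single ``unique representation'' lemma with coefficients in $[-2m_i,2m_i]$ and then apply it twice, whereas the paper reruns the top-down argument separately in each case; but the mathematical content is the same.
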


By using the above NHSDP construction, we can obtain the following PDA. 
\begin{theorem}\rm
\label{th-main-PDA}
For  any positive integer $n$ and $n$ positive integers $m_1$, $m_2$, $\ldots$, $m_n$,  given a  $(v,2^n,\prod_{i=1}^{n}m_i)$ NHSDP,  we can obtain a $v$-division $(v,M,N)$ coded caching scheme with memory ratio $M/N=1-\frac{2^n\prod_{i=1}^{n}m_i}{v}$, coded caching gain $g=2^n$, and transmission load $R=\prod_{i=1}^{n}m_i$, under the constraint of $v\geq 2\phi(m_1,m_2,\ldots,m_n)+1$. 
\hfill $\square$ 
\end{theorem}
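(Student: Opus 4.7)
The plan is to derive this theorem as an immediate packaging of two previously established results: Lemma~\ref{th-main-2}, which delivers the required NHSDP, and Theorem~\ref{th-1}, which converts any NHSDP into the corresponding coded caching scheme. The constraint $v \geq 2\phi(m_1,m_2,\ldots,m_n)+1$ is exactly the hypothesis of Lemma~\ref{th-main-2}, so the first step is simply to invoke that lemma on the given parameters to obtain a $(v, 2^n, \prod_{i=1}^n m_i)$ NHSDP via Construction~\ref{constr-2}.

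The second step is to apply Theorem~\ref{th-1} to this NHSDP with the identification $g = 2^n$ and $b = \prod_{i=1}^n m_i$. Theorem~\ref{th-1} then produces a $(v, v, v - bg, bv)$ PDA, and through Algorithm~\ref{alg-PDA} (whose correctness is recorded in Lemma~\ref{le-PDA}) a $v$-division $(v, M, N)$ coded caching scheme. Substituting our values of $g$ and $b$ into the formulas of Theorem~\ref{th-1} yields memory ratio $M/N = 1 - bg/v = 1 - 2^n \prod_{i=1}^n m_i / v$, coded caching gain $2^n$, subpacketization $F = v$, and transmission load $R = \prod_{i=1}^n m_i$, which is precisely the claimed scheme.

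Because the theorem is essentially a corollary of what has already been proved, I do not anticipate any real obstacle in writing out the proof; all of the genuine combinatorial content has been deposited upstream, in Lemma~\ref{th-main-2} (which secures both NHSDP conditions for Construction~\ref{constr-2}) and in Theorem~\ref{th-1} (which verifies the PDA axioms for Construction~\ref{cons-1}). The only care required is clerical: ensuring that the block-count $\prod_{i=1}^n m_i$ and block-size $2^n$ of the NHSDP produced by Construction~\ref{constr-2} are substituted in the correct positions of the parameter formulas of Theorem~\ref{th-1}, and noting that the hypothesis on $v$ is both the condition for Lemma~\ref{th-main-2} to apply and, by Definition~\ref{def-NHSDP}, the condition ensuring $v$ is odd so that the half-sum operation in $\mathbb{Z}_v$ is well defined. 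Once this bookkeeping is carried out, the conclusion follows immediately.
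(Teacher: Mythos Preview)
Your proposal is correct and matches the paper's own treatment: the paper presents Theorem~\ref{th-main-PDA} immediately after Lemma~\ref{th-main-2} with the lead-in ``By using the above NHSDP construction, we can obtain the following PDA,'' offering no separate proof beyond the implicit composition of Lemma~\ref{th-main-2} with Theorem~\ref{th-1}. One small clerical note: the inequality $v \geq 2\phi(m_1,\ldots,m_n)+1$ does not by itself force $v$ to be odd; the oddness of $v$ is a separate standing hypothesis in Lemma~\ref{th-main-2} (and in Definition~\ref{def-NHSDP}), so you should list it alongside the inequality rather than claim the inequality implies it.
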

Next, given the parameters $v$ and $n$ (recall that $K=v$ and $g=2^n$), we consider the selection of $m_1,\ldots,m_n$, such that the memory ratio is minimum. In other words, fixing the  coded caching gain, we aim to search the coded caching scheme requiring the minimum memory ratio.   
Considering the constraint $v\geq 2\phi(m_1,m_2,\ldots,m_n)+1$ into the optimization problem for the selection of $m_1,m_2,\ldots,m_n$, we have 
\begin{equation}\label{eq-optimization}
\begin{aligned}
	\text{Problem 1.}  \ \ \  & \textbf{Maximize } \text{function } f=\prod_{i=1}^{n}m_i \\
	&\textbf{Constrains: } m_1,\ldots,m_n \in \mathbb{Z}^{+},\\
	&\quad\quad\quad \quad \quad  \sum_{i=1}^{n-1}\left(m_i\prod_{j=i+1}^{n}(1+2m_{i})\right)+ m_{n} \leq \frac{v-1}{2}.
\end{aligned}
\end{equation}

Problem~1 is an integer programming problem, which is NP-hard. We first relax the constraint $m_1,m_2,\ldots,m_n \in \mathbb{Z}^{+}$ to $m_1,m_2,\ldots,m_n \in \mathbb{R}^{+}$, in order to simplify the problem to a convex optimization problem, which is then solved by the Lagrange Multiplier Method. Finally we take the floor operation to the solution.  For the ease of notation, we define that $q:=K^{1/n}$.  The following theorem introduces the sub-optimal solution by the above optimization strategy, whose proof is given in Appendix~\ref{sec:lagrange}.
\begin{theorem}\rm
\label{th:Lagrange}
A sub-optimal solution (with closed-form) to Problem~1 is 
\begin{align}
m_1=m_2=\cdots=m_n= \left\lfloor \frac{v^{1/n}-1}{2} \right\rfloor=\left\lfloor\frac{q-1}{2}\right\rfloor. 
\label{eq:m1=m2}
\end{align}
Under the selection in~\eqref{eq:m1=m2}, the resulting PDA has the parameters 
\begin{equation*}
K=q^n,\ \ F=q^n,\ \ Z=q^n-\left(2\left\lfloor \frac{q-1}{2} \right\rfloor\right)^n,\ \ S={\left\lfloor \frac{q-1}{2} \right\rfloor}^nq^n,
\end{equation*} which can realize a coded caching scheme with the memory ratio $\frac{M}{N}=1-\frac{2^n\lfloor\frac{q-1}{2}\rfloor^n}{q^n}$ and the transmission load $R={\left\lfloor \frac{q-1}{2} \right\rfloor}^n$.

\hfill $\square$ 
\end{theorem}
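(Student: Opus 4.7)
The plan is to first recast Problem~1 into a fully symmetric form, then solve the real relaxation by Lagrange multipliers, and finally round down to integers. First I would establish the telescoping identity
\begin{equation*}
\prod_{i=1}^{n}(1+2m_i) \;=\; 1 + 2\!\sum_{i=1}^{n-1} m_i\!\prod_{j=i+1}^{n}(1+2m_j) + 2m_n,
\end{equation*}
which follows by a one-line induction on $n$ from $\prod_{i=1}^n(1+2m_i) = (1+2m_n)\prod_{i=1}^{n-1}(1+2m_i)$. Consequently the constraint in Problem~1 is equivalent to $\prod_{i=1}^n (1+2m_i)\le v$, and both the objective $\prod_i m_i$ and this constraint are now manifestly symmetric under permutations of $(m_1,\ldots,m_n)$.

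Next I would relax $m_i\in\mathbb{Z}^+$ to $m_i\in\mathbb{R}^+$. Since the LHS of the constraint is strictly increasing in each coordinate, any maximizer must satisfy the constraint with equality. Taking logarithms reduces the relaxed problem to: maximize $\sum_i\log m_i$ subject to $\sum_i\log(1+2m_i)=\log v$. Stationarity of the Lagrangian yields
\begin{equation*}
\frac{1}{m_i} \;=\; \lambda\cdot\frac{2}{1+2m_i} \quad\Longleftrightarrow\quad \frac{1}{m_i}+2 \;=\; 2\lambda,\qquad i\in[n],
\end{equation*}
so the unique critical point is the symmetric one $m_1=\cdots=m_n=m$, and the active constraint $(1+2m)^n=v$ then gives $m=(v^{1/n}-1)/2$.

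To confirm this critical point is a global maximum rather than a saddle, I would substitute $y_i=\log(1+2m_i)$ and observe that $h(y):=\log((e^y-1)/2)$ satisfies $h''(y)=-e^y/(e^y-1)^2<0$, so $h$ is strictly concave; hence by Jensen's inequality the symmetric point maximizes $\sum_i h(y_i)$ on the hyperplane $\sum_i y_i=\log v$. Finally, since the original constraint is monotone non-decreasing in every coordinate, rounding the real optimum down to $m=\lfloor(v^{1/n}-1)/2\rfloor$ preserves feasibility and produces an integer-feasible point; substituting this choice into Theorem~\ref{th-main-PDA} yields exactly the claimed PDA parameters $(K,F,Z,S)$, memory ratio $\frac{M}{N}=1-\frac{2^n\lfloor(q-1)/2\rfloor^n}{q^n}$, and load $R=\lfloor(q-1)/2\rfloor^n$. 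The only non-routine step I anticipate is the concavity check that certifies the symmetric critical point as the genuine maximum; everything else — the identity, the Lagrange stationarity equations, and the floor argument — is a direct calculation.
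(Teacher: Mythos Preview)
Your proposal is correct and takes a genuinely different, cleaner route than the paper. The paper works directly with the asymmetric form of the constraint: it writes down the Lagrangian, computes $\partial\psi/\partial m_k$ for $k=1,2,3$ one at a time, and then proves $m_1=\cdots=m_n$ by an induction in which each new partial-derivative equation is simplified using the previously established equalities; only after equality is secured does the constraint collapse to the geometric series $\tfrac{(1+2m_1)^n-1}{2}\le\tfrac{v-1}{2}$. You instead establish the telescoping identity $\prod_i(1+2m_i)=1+2\phi(m_1,\ldots,m_n)$ at the outset, so the constraint is symmetric from the start; a log-transformation then reduces the relaxed problem to maximizing $\sum_i h(y_i)$ on a hyperplane with $h$ strictly concave, and both the symmetric stationary point and its global optimality fall out of Jensen in one line. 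Besides being shorter, your argument supplies the second-order certification (the concavity of $h$) that the paper glosses over by merely asserting the relaxed problem is ``clearly'' convex; the final floor step and the read-off of $(K,F,Z,S)$ from Theorem~\ref{th-main-PDA} coincide with the paper.
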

 By Theorem~\ref{th:Lagrange}, given a user number $K$ there are $\left \lfloor \log_p K\right \rfloor$ memory-rate ratio points for each positive $p\geq 3$. In addition, when $q$ is equal to a positive odd integer such that $q \geq 3$, the solution in \eqref{eq:m1=m2} is an optimal solution to Problem~1. Then by Theorem~\ref{th:Lagrange} we have the following result.
\begin{remark}[Optimal solution of Problem 1]\rm
\label{re-optimal}
When $q$ in Theorem~\ref{th:Lagrange} is an odd integer, we have a $(K=q^n, F=q^n,Z=q^n-(q-1)^n,S=(\frac{q-1}{2})^nq^n)$ PDA which realizes a $(K,M,N)$ coded cachign scheme with the memory ratio $M/N=1-(\frac{q-1}{q})^n$, subpacketization $F=q^n$, and transmission load $R=(\frac{q-1}{2})^n$.
\end{remark}

Given a PDA, the authors in \cite{CJTY} pointed out that its corresponding conjugate PDAs can be obtained in the following.
\begin{lemma}\cite{CJTY}\rm 
\label{permutations of PDA}
Gvien a $(K,F,Z,S)$ PDA for some positive integers $K$, $F$, $Z$ and $S$ with $0< Z<F$, there exists a $(K,S,S-(F-Z),F)$ PDA. \hfill $\square$  
\end{lemma}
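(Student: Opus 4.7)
The plan is to build the ``conjugate'' $\mathbf{Q}$ of $\mathbf{P}=(p_{j,k})$ by swapping the roles of row index and integer label. Concretely, I would define an $S\times K$ array $\mathbf{Q}=(q_{s,k})$ by
\[
q_{s,k}=\begin{cases} j, & \text{if } p_{j,k}=s \text{ for some } j\in[F];\\ *, & \text{otherwise.}\end{cases}
\]
When such a $j$ exists it is unique in column $k$, because condition~C3a of the original PDA forbids the same symbol $s$ from appearing twice in a single column. So the map is well-defined, and the target parameters $(K,S,S-(F-Z),F)$ are the right ones to aim for: the new number of rows is $S$, the integer alphabet is $[F]$, and each column should carry as many integers as the corresponding column of $\mathbf{P}$.

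The verification of the three PDA axioms for $\mathbf{Q}$ then proceeds in order. For C1, each column of $\mathbf{P}$ has exactly $F-Z$ integer entries which are pairwise distinct, hence column $k$ of $\mathbf{Q}$ has exactly $F-Z$ non-star entries, leaving $S-(F-Z)$ stars. For C3a, suppose $q_{s_1,k_1}=q_{s_2,k_2}=j$ with $(s_1,k_1)\neq(s_2,k_2)$; then $p_{j,k_1}=s_1$ and $p_{j,k_2}=s_2$ both sit in row $j$ of $\mathbf{P}$. The case $s_1=s_2$ would violate C3a of $\mathbf{P}$ (same integer twice in row $j$), while $k_1=k_2$ forces $s_1=s_2$, a contradiction; hence $s_1\neq s_2$ and $k_1\neq k_2$. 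For C3b, under the same hypotheses assume for contradiction that $q_{s_1,k_2}=j'\in[F]$. Then $p_{j',k_2}=s_1=p_{j,k_1}$, and applying condition~C3b of $\mathbf{P}$ to these two occurrences of $s_1$ forces $p_{j,k_2}=*$, contradicting $p_{j,k_2}=s_2\in[S]$. Therefore $q_{s_1,k_2}=*$, and symmetrically $q_{s_2,k_1}=*$.

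The main obstacle I anticipate is condition~C2 for $\mathbf{Q}$: every $j\in[F]$ must occur at least once, which is equivalent to every row of $\mathbf{P}$ containing at least one integer entry. Definition~\ref{def-PDA} does not directly enforce this, so the cleanest resolution is to treat it as a non-degeneracy condition on $\mathbf{P}$; if some row of $\mathbf{P}$ were all stars, one would first delete such rows and replace $F$ by the number of surviving rows before conjugating. Under this mild assumption, C2 is immediate from the definition of $\mathbf{Q}$, and the full parameter count $(K,S,S-(F-Z),F)$ follows at once from the construction, completing the plan.
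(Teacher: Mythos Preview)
Your construction is exactly the standard ``conjugate'' PDA from \cite{CJTY}, and the paper itself does not supply a proof (the lemma is quoted with a reference and $\square$). Your verification of C1, C3a, and C3b is correct, and your observation about C2 requiring that no row of $\mathbf{P}$ be all stars is precisely the mild non-degeneracy assumption implicit in the cited result.
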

By Theorem~\ref{th:Lagrange} and Lemma~\ref{permutations of PDA}, the following new PDA can be obtained. 

\begin{corollary}[Conjugate PDA of Theorem~\ref{th:Lagrange}]\rm
\label{cor:conjugate}
The conjugate PDA corresponding  to the PDA mentioned in Theorem~\ref{th:Lagrange} is a  $(q^n,{\left\lfloor \frac{q-1}{2} \right\rfloor}^nq^n,{\left\lfloor \frac{q-1}{2} \right\rfloor}^n(q^n-2^n),q^n)$ PDA which can realize a coded caching scheme with memory ratio $\frac{M}{N}=1-(\frac{2}{q})^n$ and the transmission load $R={\left\lfloor \frac{q-1}{2} \right\rfloor}^{-n}$.
\end{corollary}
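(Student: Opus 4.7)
The plan is to obtain this corollary by a direct application of the conjugation operation from Lemma~\ref{permutations of PDA} to the PDA whose parameters were established in Theorem~\ref{th:Lagrange}, followed by arithmetic simplification and translation into cache/load quantities via Lemma~\ref{le-PDA}. There is no new combinatorial content; the entire argument is bookkeeping.

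First I would recall the parameters from Theorem~\ref{th:Lagrange}, namely a $(K,F,Z,S)$ PDA with
\begin{equation*}
K=q^{n},\quad F=q^{n},\quad Z=q^{n}-\bigl(2\lfloor (q-1)/2\rfloor\bigr)^{n},\quad S=\lfloor (q-1)/2\rfloor^{n}\,q^{n}.
\end{equation*}
Before invoking the conjugation lemma I would check its hypothesis $0<Z<F$: since $q\geq 3$ we have $2\lfloor (q-1)/2\rfloor<q$, hence $(2\lfloor (q-1)/2\rfloor)^{n}<q^{n}$, giving $Z>0$; and $Z<F$ since at least one entry of the PDA from Construction~\ref{cons-1} is an integer (indeed there are $S>0$ of them).

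Next I would apply Lemma~\ref{permutations of PDA}, which converts a $(K,F,Z,S)$ PDA into a $(K,S,S-(F-Z),F)$ PDA. Substituting the parameters above yields
\begin{equation*}
K'=q^{n},\qquad F'=\lfloor (q-1)/2\rfloor^{n}\,q^{n},\qquad S'=q^{n},
\end{equation*}
while the new number of stars per column simplifies as
\begin{equation*}
Z'=S-(F-Z)=\lfloor (q-1)/2\rfloor^{n}\,q^{n}-\bigl(2\lfloor (q-1)/2\rfloor\bigr)^{n}=\lfloor (q-1)/2\rfloor^{n}\bigl(q^{n}-2^{n}\bigr),
\end{equation*}
matching exactly the parameter tuple claimed in the corollary.

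Finally, I would invoke Lemma~\ref{le-PDA} to translate these PDA parameters into scheme parameters: the memory ratio becomes
\begin{equation*}
\frac{M}{N}=\frac{Z'}{F'}=\frac{\lfloor (q-1)/2\rfloor^{n}(q^{n}-2^{n})}{\lfloor (q-1)/2\rfloor^{n}\,q^{n}}=1-\Bigl(\frac{2}{q}\Bigr)^{n},
\end{equation*}
and the transmission load becomes
\begin{equation*}
R=\frac{S'}{F'}=\frac{q^{n}}{\lfloor (q-1)/2\rfloor^{n}\,q^{n}}=\lfloor (q-1)/2\rfloor^{-n},
\end{equation*}
which completes the proof. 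The only ``obstacle'' is keeping the algebra straight, in particular verifying that the factor $\lfloor (q-1)/2\rfloor^{n}$ cancels cleanly in $Z'/F'$ and that $S-(F-Z)$ reduces to the stated product form; both are immediate. No new construction or combinatorial argument is required beyond what has already been developed.
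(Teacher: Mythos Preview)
Your proposal is correct and follows exactly the approach the paper indicates: the paper states the corollary as an immediate consequence of Theorem~\ref{th:Lagrange} and Lemma~\ref{permutations of PDA} without writing out the arithmetic, and your write-up simply supplies that bookkeeping (with Lemma~\ref{le-PDA} for the translation to memory ratio and load). Nothing is missing.
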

In particular, if $q$ equals to some odd integer such that $q \geq 3$, we have PDAs in  above Corollary~\ref{cor:conjugate} can be written as the following $(q^n,(\frac{q-1}{2})^nq^n,(\frac{q-1}{2})^nq^n-(q-1)^n,q^n)$ PDA.

\section{Performance analysis}
\label{sec-perf-ana}
In this section, we will present theoretical and numerical comparisons with the existing schemes respectively to show the performance of our new scheme in Theorem~\ref{th:Lagrange}.

\subsection{Theoretical comparisons} 
Since the schemes in \cite{YTCC,CKSM,ASK,ZCW,AST} have the special parameters of the user number and the memory ratio we only need to compare our scheme with the schemes in~\cite{MN,YCTC,SZG,CWZW,WCLC,WCWL,XXGL} respectively.
\subsubsection{Comparison with the MN scheme in \cite{MN}}  
When $t=q^n-2^n\lfloor\frac{q-1}{2}\rfloor^n$,  we can obtain the MN scheme with the memory ratio $\frac{M}{N}=1-\frac{2^n\lfloor\frac{q-1}{2}\rfloor^n}{q^n}$, the subpacketization $F_{\text{MN}}={K\choose t}=\binom{q^n}{q^n-2^n\lfloor\frac{q-1}{2}\rfloor^n}=\binom{q^n}{2^n\lfloor\frac{q-1}{2}\rfloor^n}$, and the transmission load $R_{\text{MN}}=\frac{2^n\lfloor\frac{q-1}{2}\rfloor^n}{q^n-2^n\lfloor\frac{q-1}{2}\rfloor^n+1}$. By Theorem~\ref{th:Lagrange}, we can obtain our scheme with the same memory ratio where  $K=F=q^n$ and the  transmission load $R=\lfloor\frac{q-1}{2}\rfloor^n$. Then  the following results can be obtained,
\begin{align*}
\frac{F_{\text{MN}}}{F}
&=\frac{\binom{q^n}{2^n\lfloor\frac{q-1}{2}\rfloor^n}}{q^n}
\approx\frac{q^{n2^n\lfloor\frac{q-1}{2}\rfloor^n}}{q^n} >\frac{q^{n(q-3)^n}}{q^n}=q^{n(q-3)^n-n}=K^{(q-3)^n-1},\\
\frac{R_{\text{MN}}}{R}
&=\frac{\frac{2^n\lfloor\frac{q-1}{2}\rfloor^n}{q^n-2^n\lfloor\frac{q-1}{2}\rfloor^n+1}}{\lfloor\frac{q-1}{2}\rfloor^n}
=\frac{2^n}{q^n-2^n\lfloor\frac{q-1}{2}\rfloor^n+1}
<\frac{2^n}{q^n-(q-1)^n+1}
\approx \frac{1}{K}\cdot\frac{2^n}{1-(\frac{q-1}{q})^n}.
\end{align*}Compared to the MN scheme, the multiplicative reduction amount by our subpacketization is proportional to  $K^{(q-3)^n-1}$ which grows exponentially with the user number $K$ at a rate of  $(q-3)^n$, while the increase amount of our transmission is only  $$\frac{q^n-2^n\lfloor\frac{q-1}{2}\rfloor^n+1}{2^n}\approx \frac{q^n-(q-1)^n}{2^n}.$$

\subsubsection{Comparison with the WCLC scheme in \cite{WCLC}} 
The authors in \cite{WCLC} showed that the scheme includes the partition schemes in \cite{YCTC}, the hypergraph schemes in \cite{SZG}, and the OA schemes in \cite{CWZW} as the special cases.  So first we only need to compare with  the scheme in \cite{WCLC}. When $z>1$,  the user number in the WCLC scheme is too complex, so we have to use some specific parameters for the comparison. When $m=z=n$, $k=q$, and $t=q-2\lfloor\frac{q-1}{2}\rfloor$, we have the WCLC scheme in  \cite{WCLC}, where the number of users $K=q^n$ with the memory ratio, the subpacketization, and the transmission load are respectively
\begin{align*}
\frac{M}{N}=1-\frac{2^n\lfloor\frac{q-1}{2}\rfloor^n}{q^n},\ \ \ F_{\text{WCLC}}=\frac{q^{n-1}(q-1)^n}{2^n\lfloor\frac{q-1}{2}\rfloor^n},\ \ \  R_{\text{WCLC}}=\frac{2^n\left\lfloor\frac{q-1}{2}\right\rfloor^n}{\lfloor\frac{q-1}{2\left\lfloor\frac{q-1}{2}\right\rfloor}\rfloor^n}.
\end{align*}By Theorem~\ref{th:Lagrange}, we can obtain a coded caching scheme with the same memory ratio and the same number of users, the subpacketization $F=q^n$, and the transmission load $R=\lfloor\frac{q-1}{2}\rfloor^n$. Then the following result can be obtained,
\begin{align*}
\frac{F_{\text{WCLC}}}{F}=\frac{\frac{q^{n-1}(q-1)^n}{2^n\lfloor\frac{q-1}{2}\rfloor^n}}{q^n}=\frac{(q-1)^n}{q\cdot2^n\lfloor\frac{q-1}{2}\rfloor^n} \approx\frac{1}{q},\ \ \ 
\frac{R_{\text{WCLC}}}{R}=\frac{\frac{2^n\lfloor\frac{q-1}{2}\rfloor^n}{\lfloor\frac{q-1}{2\lfloor\frac{q-1}{2}\rfloor}\rfloor^n}}{\left\lfloor\frac{q-1}{2}\right\rfloor^n}=\frac{2^n}{\lfloor\frac{q-1}{2\lfloor\frac{q-1}{2}\rfloor}\rfloor^n} < \frac{4^{n}\lfloor\frac{q-1}{2}\rfloor^n}{(q-1)^n}\approx2^n.
\end{align*}Compared to the WCLC scheme, our subpacketization increases by a factor of $q$  while reducing  the transmission load by a factor of $2^n$.

\subsubsection{Comparison with the WCWL scheme in \cite{WCWL}} 
 Let $K=q^n$ and $t=q^n-2^n \lfloor\frac{q-1}{2}\rfloor^n$ in \cite{WCWL}, i.e., the $10$th row of Table~\ref{tab-Schemes}. 
We compare the WCWL scheme with the proposed scheme in the following cases:
\begin{itemize}
\item If $(K-t+1)|K$ or $K-t=1$, we have 
\begin{equation*}
F_{\text{WCWL}}=K=q^n,\ \  R_{\text{WCWL}}=\frac{(K-t)(K-t+1)}{2K}=\frac{2^n\lfloor\frac{q-1}{2}\rfloor^n(2^n\lfloor\frac{q-1}{2}\rfloor^n+1)}{2q^n}.
\end{equation*}
 By Theorem~\ref{th:Lagrange}, we have the following result.
\begin{align*}
\frac{F_{\text{WCWL}}}{F}&=\frac{q^n}{q^n}=1, \\
\frac{R_{\text{WCWL}}}{R}&=\frac{\frac{2^n\lfloor\frac{q-1}{2}\rfloor^n(2^n\lfloor\frac{q-1}{2}\rfloor^n+1)}{2q^n}}{\lfloor\frac{q-1}{2}\rfloor^n}=\frac{2^n(2^n\lfloor\frac{q-1}{2}\rfloor^n+1)}{2q^n}>\frac{2^n(q-3)^n}{2q^n}=\frac{1}{2}\left(2-\frac{6}{q}\right)^n.
\end{align*}
\item If $\left \langle K \right \rangle_{K-t+1}=K-t$, we can get the WCWL scheme with
 \begin{equation*}
 	F_{\text{WCWL}}=\left(2\lfloor\frac{K}{K-t+1}\rfloor+1\right)K,\ \ \ R_{\text{WCWL}}=\frac{K-t}{2\lfloor\frac{K}{K-t+1}\rfloor+1}
 	=\frac{2^n\lfloor\frac{q-1}{2}\rfloor^n}{2\left\lfloor \frac{q^n}{2^n\lfloor\frac{q-1}{2}\rfloor^n+1} \right\rfloor+1}.
 \end{equation*}
 By Theorem~\ref{th:Lagrange}, we can obtain the following.
 \begin{align*}
 \frac{F_{\text{WCWL}}}{F}&=\frac{(2\lfloor \frac{K}{K-t+1}\rfloor+1) K}{q^n}=2\left\lfloor \frac{K}{K-t+1}\right\rfloor+1,\\
 \frac{R_{\text{WCWL}}}{R}&=\frac{\frac{2^n\lfloor\frac{q-1}{2}\rfloor^n}{2\left\lfloor \frac{q^n}{2^n\lfloor\frac{q-1}{2}\rfloor^n+1} \right\rfloor+1}}{\lfloor\frac{q-1}{2}\rfloor^n}
 =\frac{2^n}{2\left\lfloor \frac{q^n}{2^n\lfloor\frac{q-1}{2}\rfloor^n+1} \right\rfloor+1}\\
 &>\frac{2^n(q-3)^n}{2q^n+(q-1)^n+1}\\
 &>\frac{2^n(q-3)^n}{3q^n}\\
& =\frac{1}{3}\cdot\left(2-\frac{6}{q}\right)^n.
 \end{align*}
 \item Otherwise, the WCWL scheme  can be obtain as follow.
   $$F_{\text{WCWL}}=2\left\lfloor\frac{K}{K-t+1}\right\rfloor K,\ \ \ R_{\text{WCWL}}=\frac{K-t}{2\lfloor\frac{K}{K-t+1}\rfloor}
  =\frac{2^n\lfloor\frac{q-1}{2}\rfloor^n}{2\left\lfloor \frac{q^n}{2^n\lfloor\frac{q-1}{2}\rfloor^n+1} \right\rfloor}.$$ By Theorem~\ref{th:Lagrange}, we have
  \begin{align*}
  \frac{F_{\text{WCWL}}}{F}&=\frac{2\left\lfloor\frac{K}{K-t+1}\right\rfloor K}{q^n}=2\left\lfloor \frac{K}{K-t+1}\right\rfloor,\\
  \frac{R_{\text{WCWL}}}{R}&=\frac{\frac{2^n\lfloor\frac{q-1}{2}\rfloor^n}{2\left\lfloor \frac{q^n}{2^n\lfloor\frac{q-1}{2}\rfloor^n+1} \right\rfloor}}{\lfloor\frac{q-1}{2}\rfloor^n}
  =\frac{2^n}{2\left\lfloor \frac{q^n}{2^n\lfloor\frac{q-1}{2}\rfloor^n+1} \right\rfloor}
  >\frac{2^n(q-3)^n}{2q^n}=\frac{1}{2}\cdot \left(2-\frac{6}{q}\right)^n.
  \end{align*}   
\end{itemize}

As a result, compared to the WCWL scheme in \cite{WCWL},  our scheme either obtaining the  same subpacketization or the reduction amount of our subpacketization is larger than $2$ times.  The multiplicative reduction amount by our transmission is at least $\frac{1}{3}\cdot(2-\frac{6}{q})^n$. When $q>6$ and $n \geq \frac{\ln3}{\ln (2-\frac{6}{q})}$, the formula $\frac{1}{3}\cdot(2-\frac{6}{q})^n\geq 1$ always holds.

\subsubsection{Comparison with the XXGL scheme in \cite{XXGL}} 
Let $K=q^n$ in \cite{XXGL}, i.e., the $11$th row of Table~\ref{tab-Schemes}. We have the XXGL scheme with memory ratio $M/N=1-\frac{2}{q^n}\gg 1-(\frac{2}{q})^n$, the subpacketization 
$F_{\text{XXGL}}=q^n$, and the transmission load $R_{\text{XXGL}}=\frac{q^n-1}{q^n}$. The scheme in  Corollary~\ref{cor:conjugate} have $F=\lfloor\frac{q-1}{2}\rfloor^{n}q^n$ and $R=\lfloor\frac{q-1}{2}\rfloor^{-n}$. So we have 
\begin{align*}
\frac{F_{\text{XXGL}}}{F}=\frac{q^n}{\lfloor\frac{q-1}{2}\rfloor^{n}q^n}=\left\lfloor\frac{q-1}{2}\right\rfloor^{-n},\ \ \ \ 
\frac{R_{\text{XXGL}}}{R}=\frac{\frac{q^n-1}{q^n}}{\lfloor\frac{q-1}{2}\rfloor^{-n}}\approx\left\lfloor\frac{q-1}{2}\right\rfloor^n.
\end{align*}
Compared to the XXGL scheme in \cite{XXGL}, even though our memory ratio is much smaller than that of the XXGL scheme, the reduction amount of our transmission load is about $\lfloor\frac{q-1}{2}\rfloor^n$ times while our subpacketization increases  $\lfloor\frac{q-1}{2}\rfloor^{n}$ times which is much less than $K$. So our scheme has the significant advantages on transmission load compared to the XXGL scheme.  

\subsection{Numerical comparisons }\label{subsec-p}
In this subsection, we will present numerical comparisons between our scheme  in Theorem~\ref{th:Lagrange} and the existing schemes with linear subpacketization \cite{XXGL,WCWL,AST,ZCW,ASK}, polynomial subpacketization \cite{CKSM,YTCC}, and exponential subpacketization \cite{MN,WCLC}. These existing schemes are summarized in Table~\ref{tab-Schemes}. It is worth noting that the ASK scheme \cite{ASK} can be included as a special case of our scheme in Corollary~\ref{coro-first-q} which will be introduced in Subsection~\ref{subsect-CDP}. In addition, the XXGL scheme \cite{XXGL} has large memory ratio which approximates $1$ so in this subsection we do not need to consider the XXGL scheme.

\subsubsection{Comparison with the linear subpacketization schemes in \cite{AST,ZCW,WCWL}}
Since these schemes have the special integers, including combination numbers, powers, products of combination numbers, and powers, it is difficult for us to directly plot a graph for given values of $K$. So we present some numerical comparisons between our scheme in Theorem \ref{th:Lagrange} and the schemes in \cite{AST,ZCW} in  Table~\ref{tab-numerical-2}. By Table ~\ref{tab-numerical-2}, our scheme in Theorem~\ref{th:Lagrange} has more users, a close memory ratio, a similar subpacketization, and a smaller load than that of the  ZCW scheme; our scheme has a smaller memory ratio, a lower transmission load, and a smaller subpacketization than that of the AST scheme,  while having more users.   
\begin{table}[htbp!]
	\centering
	\caption{The numerical comparison between  the scheme in Theorem~\ref{th:Lagrange} and the schemes in \cite{AST,ZCW}}
	\label{tab-numerical-2}
	\begin{tabular}{|c|c|c|c|c|c|}
		\hline
		$K$   & $M/N$   & Scheme  & Parameters & Load   & Subpacketization \\ \hline 
		$32$ & $0.6875$ & ZCW scheme in \cite{ZCW} & $(m,w)=(5,2)$ & $1.25$ & $32$  \\
		$33$ & $0.7576$ & Scheme in Theorem~\ref{th:Lagrange}& $(v,n)=(33,3)$ & $1$ & $33$  \\ \hline
		$128$ & $0.836$ & ZCW scheme in \cite{ZCW} & $(m,w)=(7,2)$ & $2.625$ & $128$  \\ 
		$129$ & $0.876$ & Scheme in Theorem~\ref{th:Lagrange} & $(v,n)=(129,4)$ & $1$ & $129$  \\ \hline
		$256$ & $0.8906$ & ZCW scheme in \cite{ZCW} & $(m,w)=(8,2)$ & $2.41$ & $256$  \\ 
		$257$ & $0.8755$ & Scheme in Theorem~\ref{th:Lagrange} & $(v,n)=(257,5)$ & $1$ & $257$  \\ \hline
		$512$ & $0.9297$ & ZCW scheme in \cite{ZCW} & $(m,w)=(9,2)$ & $2.04$ & $512$  \\ 
		$513$ & $0.9376$ & Scheme in Theorem~\ref{th:Lagrange} & $(v,n)=(513,5)$ & $1$ & $513$  \\ \hline
		$52$ & $0.28846$ & AST scheme in \cite{AST} & $(r,k)=(2,13)$ & $9.25$ & $52$  \\ 
		$49$ & $0.26531$ & Scheme in Theorem~\ref{th:Lagrange}  & $(v,n)=(49,2)$ & $9$ & $49$  \\ \hline 
		$1332$ & $0.2515$ & AST scheme in \cite{AST}  & $(r,k)=(2,333)$ & $249.25$ & $1332$  \\
		$1331$ & $0.2498$ & Scheme in Theorem~\ref{th:Lagrange}  & $(v,n)=(1331,3)$ & $125$ & $1331$  \\  \hline
		
		$2192$ & $0.2509$ & AST scheme in \cite{AST}  & $(r,k)=(2,548)$ & $410.5$ & $2192$  \\
		$2199$ & $0.2142$ & Scheme in Theorem~\ref{th:Lagrange} & $(v,n)=(2199,3)$ & $216$ & $2199$  \\ \hline
		
		$2400$ & $0.50125$ & AST scheme in \cite{AST} & $(r,k)=(3,300)$ & $149.625$ & $2400$  \\
		$2401$ & $0.460$ & Scheme in Theorem~\ref{th:Lagrange}  & $(v,n)=(2401,4)$ & $81$ & $2401$  \\  \hline
	\end{tabular}
\end{table}

\begin{figure}[htbp!]
	\centering
	\begin{minipage}[b]{.5\textwidth}
		\centering
		\includegraphics[scale=0.6]{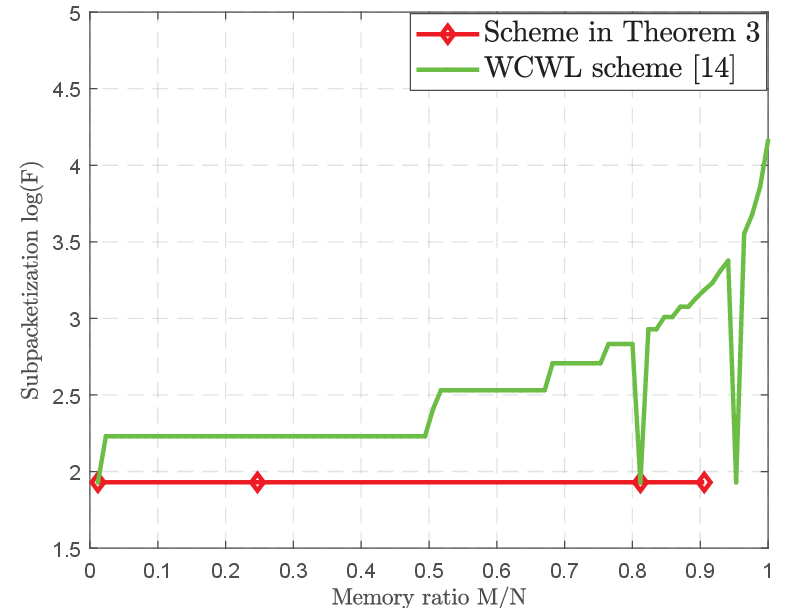}
		\caption{Memory ratio-subpacketization tradeoff for $K=85$} 
		\label{fig-com9}
	\end{minipage}
	\begin{minipage}[b]{.45\textwidth}
		\centering
		\includegraphics[scale=0.6]{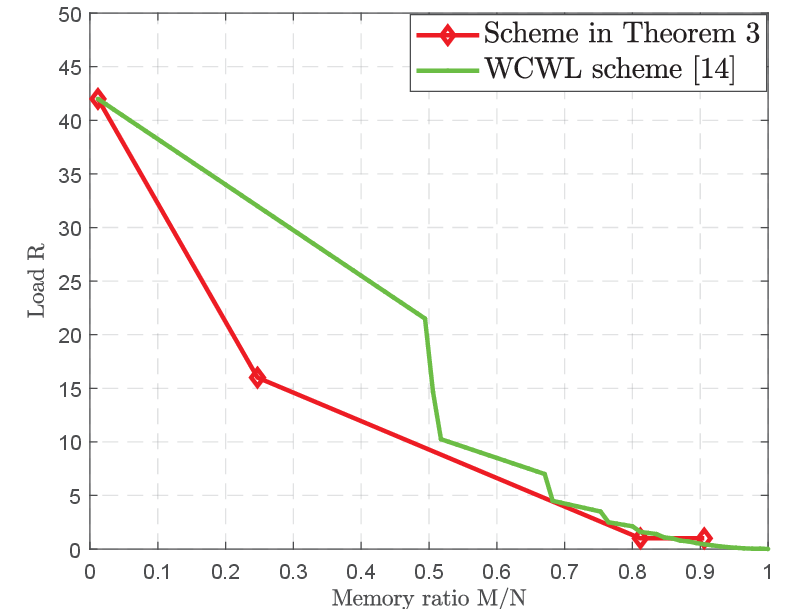}
		\caption{Memory ratio-load tradeoff for $K=85$} 
		\label{fig-com10}
	\end{minipage}
\end{figure}
Now, let us compare our scheme in Theorem~\ref{th:Lagrange} with the WCWL scheme in \cite{WCWL}. When $K=85$, we have our scheme (the red line) and the WCWL scheme (the  green line) listed in  Figure~\ref{fig-com9} and Figure~\ref{fig-com10}. We can see that our scheme has a smaller or equal subpacketization, and a lower transmission load than that of the WCWL scheme when the memory ratio is less than or equal to $0.8$. When the memory ratio is close to $0.9$, our scheme has a slightly higher transmission load while having a smaller or equal subpacketization.  
	
Finally, we would like to point out that the advantages of our scheme in terms of the subpacketization and transmission load become more obvious as the number of users $K$ increases. For instance, when $K=729$, we have our scheme and the WCWL scheme listed in Figure~\ref{fig-com11} and Figure~\ref{fig-com12}. We observe that our scheme has a smaller or equal subpacketization and a much smaller transmission load. 
\begin{figure}[htbp!]
	\centering
	\begin{minipage}[b]{.5\textwidth}
		\centering
		\includegraphics[scale=0.6]{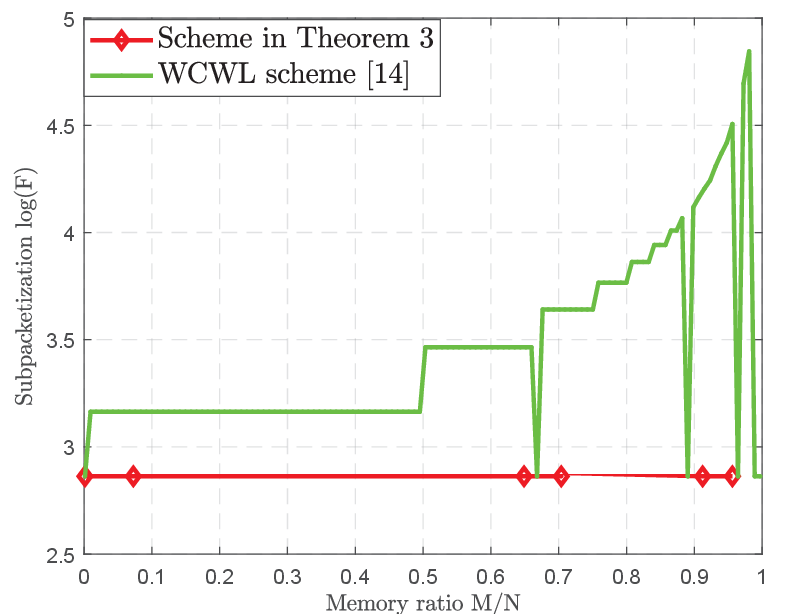}
		\caption{Memory ratio-subpacketization tradeoff for $K=729$} 
		\label{fig-com11}
		
	\end{minipage}
	\begin{minipage}[b]{.4\textwidth}
		\centering
		\includegraphics[scale=0.6]{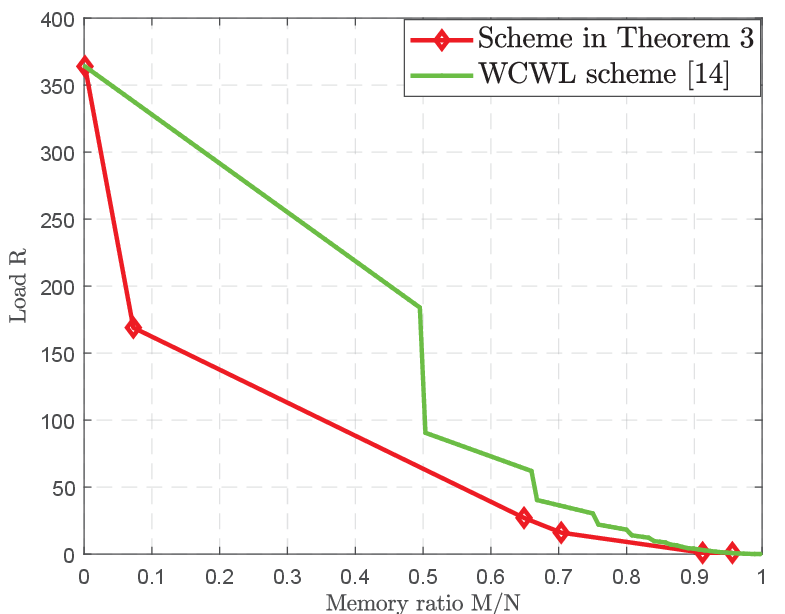}
		\caption{Memory ratio-load tradeoff for $K=729$} 
		\label{fig-com12}
	\end{minipage}
\end{figure}

\subsubsection{Comparison with the polynomial subpacketization schemes in \cite{CKSM,YTCC}}
Since the user numbers of the schemes in \cite{CKSM,YTCC} are the special integers, including combination numbers, powers, products of combination numbers, and powers, it is difficult for us to directly plot a graph for given values of $K$. Instead, we compare our scheme with the CKSM scheme and the YTCC scheme by choosing some specific user numbers and memory ratios listed in Table~\ref{tab-numerical-1}. We can see that compared to the YTCC scheme, our scheme has a similar memory ratio, a slightly larger transmission load, and more users, while having a much smaller subpacketization; compared to the CKSM scheme, our scheme has a lower transmission load and a much smaller subpacketization while having a close number of users and a close memory ratio.

\begin{table}[htbp!]
\centering
\caption{The numerical comparison between the schemes in \cite{CKSM,YTCC} and the scheme in Theorem~\ref{th:Lagrange}}
\label{tab-numerical-1}
\begin{tabular}{|c|c|c|c|c|c|}
\hline
$K$   & $M/N$   & Scheme  & Parameters & Load   & Subpacketization \\ \hline

$22$   & $0.364$  & YTCC scheme in \cite{YTCC}   & $(H,a,b,r)=(22,1,8,0)$  & $1.556$ & $319770$           \\
$25$   & $0.36$   & Scheme in Theorem~\ref{th:Lagrange}  & $(v,n)=(25,2)$          & $4$ & $25$                   \\ \hline

$78$   & $0.81$   & YTCC scheme in \cite{YTCC}   & $(H,a,b,r)=(13,2,7,0)$     & $0.41667$ & $1716$       \\ 
$81$   & $0.802$  & Scheme in Theorem~\ref{th:Lagrange}  & $(v,n)=(81,4)$               & $1$ & $81$              \\ \hline

$105$  & $0.486$  & YTCC scheme in \cite{YTCC}   & $(H,a,b,r)=(15,2,9,1)$   & $6$    & $5005$            \\
$127$ & $0.496$ & CKSM  scheme in \cite{CKSM}& $(q,k,m,t)=(2,7,6,1)$ & $9.143$ & $3.56E+09$   \\ 
$125$  & $0.488$  & Scheme in Theorem~\ref{th:Lagrange}  & $(v,n)=(125,3)$             & $8$    & $125$            \\ \hline

$325$  & $0.354$  & YTCC scheme in \cite{YTCC}   & $(H,a,b,r)=(26,2,5,0)$ & $10$  & $65780$              \\
$343$  & $0.370$  & Scheme in Theorem~\ref{th:Lagrange} & $(v,n)=(343,3)$               & $27$ & $343$          \\\hline

$231$  & $0.805$  & YTCC scheme in \cite{YTCC}   & $(H,a,b,r)=(22,2,12,0)$ & $0.4945055$  & $646646$     \\
$255$ & $0.8784$ & CKSM scheme in   \cite{CKSM} & $(q,k,m,t)=(2,8,4,1)$ & $2.0667$ & $97155$  \\ 
$243$ & $0.8683$ & Scheme in Theorem~\ref{th:Lagrange} & $(v,n)=(243,5)$ & $1$ & $243$  \\  
\hline  
$31$ & $0.7742$ & CKSM scheme in    \cite{CKSM}& $(q,k,m,t)=(2,5,2,1)$ & $1$ & $155$  \\ 
$27$ & $0.7037$ & Scheme in Theorem~\ref{th:Lagrange} & $(v,n)=(27,3)$ & $1$ & $27$  \\ \hline

$85$ & $0.247$ & CKSM  scheme in \cite{CKSM}& $(q,k,m,t)=(4,4,3,1)$ & $16$& $95200$   \\ 
$81$ & $0.210$ & Scheme in Theorem~\ref{th:Lagrange} & $(v,n)=(81,2)$ & $16$  & $81$  \\ \hline

$156$ & $0.19872$ & CKSM  scheme in \cite{CKSM}& $(q,k,m,t)=(5,4,3,1)$ & $31.25$ & $604500$   \\ 
$121$ & $0.1736$ & Scheme in Theorem~\ref{th:Lagrange} & $(v,n)=(121,2)$ & $25$ & $121$   \\ \hline

$156$ & $0.1987$ & CKSM scheme in   \cite{CKSM} & $(q,k,m,t)=(5,4,3,1)$ & $31.25$ & $604500$  \\ 
$169$ & $0.1479$ & Scheme in Theorem~\ref{th:Lagrange} & $(v,n)=(169,2)$ & $36$ & $169$  \\ \hline

$255$ & $0.12157$ & CKSM  scheme in \cite{CKSM}& $(q,k,m,t)=(2,8,5,1)$ & $37.333$ & $8.10E+09$   \\ 
$225$ & $0.1289$ & Scheme in Theorem~\ref{th:Lagrange} & $(v,n)=(225,2)$ & $49$ & $225$   \\ \hline

$364$ & $0.332$ & CKSM scheme in \cite{CKSM}& $(q,k,m,t)=(3,6,5,1)$  & $40.5$   & $4.51E+10$ \\ $343$ & $0.370$ & Scheme in Theorem~\ref{th:Lagrange} & $(v,n)=(343,3)$ & $27$  & $343$   \\ \hline
\end{tabular}
\end{table}

\subsubsection {Comparison with the  exponential subpacketization schemes in \cite{MN,WCLC}} 
Let us first directly compare our scheme in Theorem~\ref{th:Lagrange} with the MN scheme and the WCLC scheme, each of which has an exponential subpacketization level with the number of users. When $K=85$, we have our scheme (the red line), the MN scheme (the blue line), and the WCLC scheme (the green line) listed in Figure~\ref{fig-com1} and Figure~\ref{fig-com2} by Table~\ref{tab-Schemes} and Theorem~\ref{th:Lagrange}. 

Compared to the MN scheme, we can observe that our  scheme (the red line)  has a much smaller subpacketization  than the MN scheme (the blue line),  while our transmission load is little larger than the MN scheme when $K=85$. Compared to the WCLC scheme, our scheme (the red line) has a much smaller subpacketization than the WCLC scheme (the green line) while our transmission load is close to or equal to that of the WCLC scheme when $K=85$.
\begin{figure}[htbp!]
	\centering
	\begin{minipage}[t]{0.5\textwidth}
		\centering
		\includegraphics[scale=0.6]{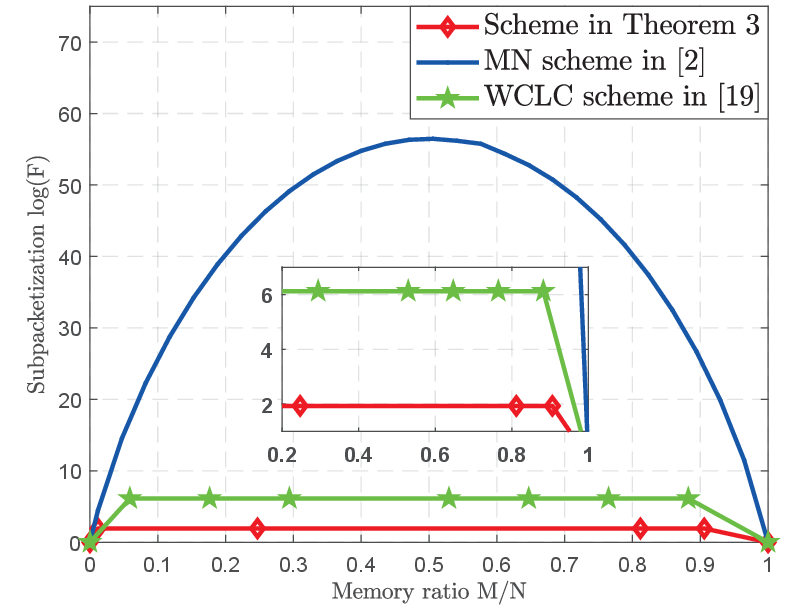}
		\caption{Memory ratio-subpacketization tradeoff for $K=85$} 
		\label{fig-com1}
	\end{minipage}
	\begin{minipage}[t]{0.45\textwidth}
		\centering
		\includegraphics[scale=0.6]{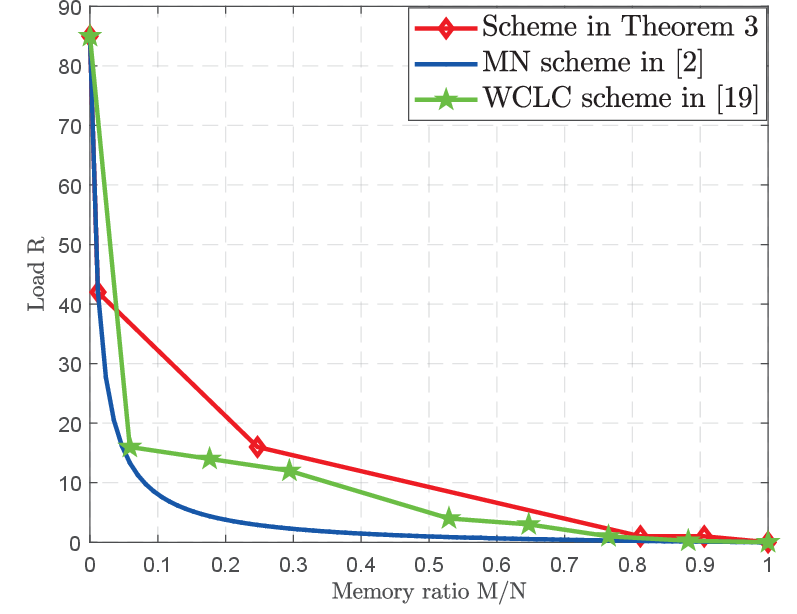}
		\caption{Memory ratio-load tradeoff for $K=85$} 
		\label{fig-com2}
	\end{minipage}
\end{figure}

Using the grouping method in Lemma~\ref{le-fundamental recursive} on the schemes in \cite{MN,WCLC},  through an exhaustive search on $K_1$ given $K=85$, we find that we cannot obtain a scheme with both a lower load and  a smaller subpacketization. 
Finally we should point out that the advantages on the subpacketization and transmission of our scheme are more obvious as $K$ increases. For instance, when $K=729$ we compare our scheme with the MN scheme and the WCLC scheme in Fig.~\ref{fig-com3} and Fig.~\ref{fig-com4}. Note that the ratio of the subpacketization between our scheme and the WCLC scheme significantly reduces (see Fig.~\ref{fig-com1} and Fig.~\ref{fig-com3}), and our transmission load is more close to that of the WCLC  (see Fig.~\ref{fig-com2} and Fig.~\ref{fig-com4}). 
\begin{figure}[htbp!]
	\centering
	
	\begin{minipage}[b]{.5\textwidth}
		\centering
		\includegraphics[scale=0.6]{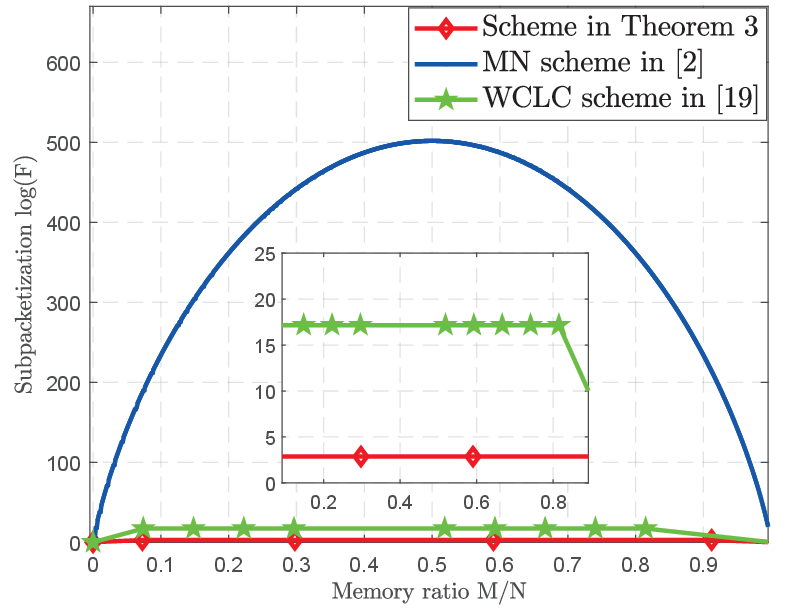}
		\caption{Memory ratio-subpacketization tradeoff for $K=729$ } 
		\label{fig-com3}
	\end{minipage}
	\begin{minipage}[b]{.45\textwidth}
		\centering
		\includegraphics[scale=0.6]{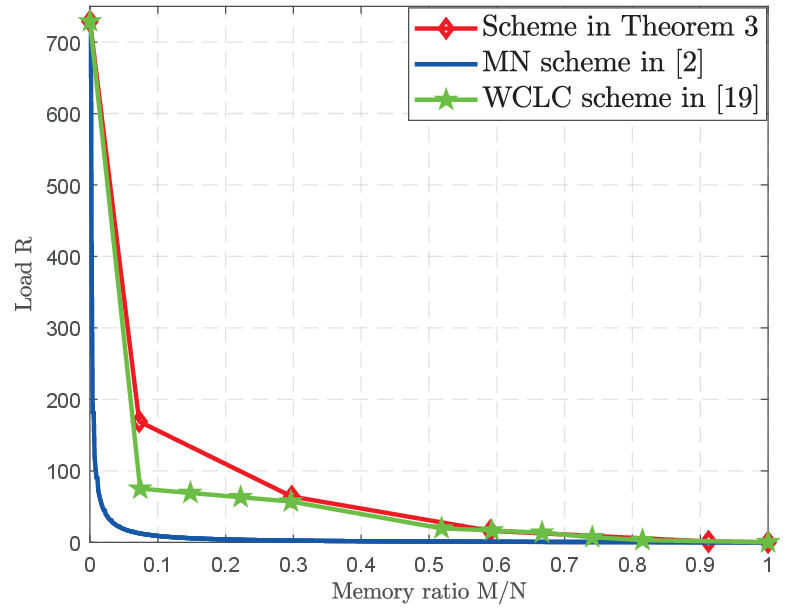}
		\caption{Memory ratio-load tradeoff  for $K=729$ } 
		\label{fig-com4}
	\end{minipage}
	
\end{figure}

\section{Relationship between NHSDP and the classic combinatorial Designs} 
\label{sec-expanding}
In this section, we will show that there is a close relationship between NHSDP and several classical combinatorial designs, such as cyclic difference packing \cite{dm/Yin98}, non-three-term arithmetic progressions \cite{brown1982density}, and perfect hash families \cite{stinson2006cryptography}. As a result, some new NHSDPs can be obtained by the existing cyclic difference packings,  and some new classes of non-three-term arithmetic progression sets and some new perfect hash families are obtained by the NHSDPs in Theorem~\ref{th:Lagrange}.
\subsection{The cyclic difference packing}
\label{subsect-CDP}
\begin{definition}[CDP\cite{dm/Yin98}]\rm
\label{def-CDP}
A $(v,k)$ cyclic difference packing (CDP) is a pair $(\mathbb{Z}_{v},\mathcal{D})$ where the subset $\mathcal{D}=\{d_1,d_2,\ldots,d_k\}$ of  $\mathbb{Z}_{v}$ satisfies that each non-zero integer has at most one representation as a difference $d_i-d_j$. If  each non-zero integer has exactly one representation as a difference $d_i-d_j$,  $(\mathbb{Z}_{v},\mathcal{D})$ is called $(v,k)$ difference set (DS).\hfill $\square$
\end{definition}
For instance when $v=7$ and $k=3$, we can obtain a pair $(\mathbb{Z}_{7},\mathcal{D}=\{0,1,3\})$. In addition we have the following relationships.
$$1=1-0,\ 2=3-1,\ 3=3-0,\ 4=0-3,\ 5=0-2,\ 6=0-1,$$
i.e., each  non-zero integer has exactly one representation as a difference $d_i-d_j$ for each $1\leq i\neq j\leq 3$. So  $(\mathbb{Z}_{7},\mathcal{D})$ is a $(7,3)$ CDP and is also a $(7,3)$ DS. In addition we can check that $(\mathbb{Z}_{7},\mathcal{D})$  is also a $(7,3,1)$ NHSDP since $\{\frac{0+1}{2}=4, \frac{0+3}{2}=5,\frac{1+3}{2}=2\}\cap\{0,1,3\}=\emptyset$. In general, the following statement always holds.
\begin{lemma}\rm
\label{PCDF-NHSDP}
Any $(v,k)$ CDP is a $(v,g=k,1)$ NHSDP. \hfill $\square$
\end{lemma}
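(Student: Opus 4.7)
The plan is to unpack what a $(v,g=k,1)$ NHSDP requires when we take the single block $\mathfrak{D}=\{\mathcal{D}\}$ for the CDP set $\mathcal{D}=\{d_1,\dots,d_k\}\subseteq\mathbb{Z}_v$, and to show that each of the two conditions in Definition~\ref{def-NHSDP} reduces to an elementary consequence of the CDP property in Definition~\ref{def-CDP}. Since there is only one block, the first condition (disjointness of distinct blocks) is vacuous, so the whole argument concerns condition~2. Also, because $v$ must be odd for the half-sum operation to be well defined inside $\mathbb{Z}_v$, I will note that a CDP with a non-trivial NHSDP interpretation should be taken over odd $v$; this matches the hypothesis on $v$ built into Definition~\ref{def-NHSDP}.

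For condition~2, I would proceed by contradiction. Assume there exist $d_i,d_j\in\mathcal{D}$ with $d_i\neq d_j$ whose half-sum lies in $\mathcal{D}$, say $(d_i+d_j)/2=d_\ell$ for some $\ell$. Multiplying by $2$ (valid since $v$ is odd) rewrites this as
\[
d_i-d_\ell \;=\; d_\ell-d_j \qquad \text{in } \mathbb{Z}_v.
\]
The next step is to rule out the degenerate cases. If $\ell=i$, the equation forces $d_j=d_i$, contradicting $d_i\neq d_j$; similarly, $\ell=j$ forces $d_i=d_j$. Hence $\ell\neq i$ and $\ell\neq j$, and in particular both sides of the displayed equation are nonzero differences in $\mathbb{Z}_v$ (using that distinct indices in a CDP correspond to distinct elements).

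At that point the CDP property kicks in. The value $d_i-d_\ell=d_\ell-d_j$ is a nonzero element of $\mathbb{Z}_v$ expressed as a difference of two elements of $\mathcal{D}$ in two genuinely different ways, namely via the ordered pairs $(i,\ell)$ and $(\ell,j)$; these pairs differ because $i\neq\ell\neq j$ and $i\neq j$. This violates the definition of CDP, which allows at most one such representation for each nonzero element. The contradiction establishes that no half-sum of two distinct elements of $\mathcal{D}$ can lie in $\mathcal{D}$, and hence condition~2 of NHSDP holds. Combined with the trivial condition~1, this yields that $(\mathbb{Z}_v,\{\mathcal{D}\})$ is a $(v,k,1)$ NHSDP, as claimed. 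The only real subtlety in this proof is the careful case analysis ensuring that the two representations $(i,\ell)$ and $(\ell,j)$ are indeed distinct ordered pairs of distinct elements, which is exactly what the CDP definition requires in order to trigger a contradiction.
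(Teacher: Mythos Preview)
Your proposal is correct and follows essentially the same approach as the paper: assume a half-sum $(d_i+d_j)/2=d_\ell$ lies in $\mathcal{D}$, rewrite as $d_i-d_\ell=d_\ell-d_j$, and derive a contradiction with the CDP property. Your version is in fact a bit more careful than the paper's, which simply assumes ``three different elements $x,y,z$'' without explicitly disposing of the degenerate possibilities $\ell=i$ or $\ell=j$; your case analysis fills that small gap.
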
 
\begin{proof}
Assume that $(\mathbb{Z}_{v},\mathcal{D}=\{d_1,d_2,\ldots,d_k\})$ is a $(v,k)$ CDP. In order to show that it is also an NHSDP, we only need to consider the half-sum of any two different elements in $\mathcal{D}$. Suppose that there are three different elements, say $x$, $y$ and $z$, of $\mathcal{D}$ that satisfy  $2z=x+y$. Then we have $a=z-y=x-z$ which is impossible by Definition~\ref{def-CDP}, i.e., each non-zero integer has at most one representation as a difference of two different elements from  $\mathcal{D}$. Then our proof is complete.
\end{proof} 
It is well known that for any prime power $q$, there always exists a $(q^2+q+1,q+1)$ DS. By Lemma~\ref{PCDF-NHSDP} we have a $(q^2+q+1,q+1,1)$ NHSDP. Then by Theorem~\ref{th-1} and Lemma~\ref{le-PDA} the following result can be obtained.
\begin{corollary}\rm
\label{coro-first-q}
For any prime power $q$, there exists a $(q^2+q+1,q^2+q+1,q+1,q^2+q+1)$ PDA which realizes a $(q^2+q+1)$-division $(K=q^2+q+1,M,N)$ coded caching scheme with $M/N=\frac{q^2}{q^2+q+1}$ and transmission load $R=1$.\hfill $\square$
\end{corollary}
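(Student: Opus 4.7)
The plan is to chain together three results already established in the paper with a single classical fact from combinatorial design theory. The only ingredient not yet in hand is the existence of a suitable difference set, after which everything becomes a mechanical substitution of parameters.

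First, I would invoke the classical Singer construction: for any prime power $q$, there exists a $(q^2+q+1,\,q+1)$ cyclic difference set $(\mathbb{Z}_{q^2+q+1},\mathcal{D})$, arising from a line in the projective plane $\mathrm{PG}(2,q)$. This existence result is standard and can be cited from any reference on difference sets; no new work is required here. Next, because a difference set is in particular a cyclic difference packing (each nonzero element has exactly one representation, hence at most one), Lemma \ref{PCDF-NHSDP} applies directly and yields a $(v,g,b)=(q^2+q+1,\,q+1,\,1)$ NHSDP.

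Finally, I would plug $(v,g,b)=(q^2+q+1,\,q+1,\,1)$ into Theorem \ref{th-1} to obtain a $(v,v,v-bg,bv)$ PDA, i.e.\ a PDA on $K=F=q^2+q+1$ with $Z=q^2+q+1-(q+1)=q^2$ stars per column and $S=q^2+q+1$ distinct integers. By Lemma \ref{le-PDA}, the induced coded caching scheme has subpacketization $F=q^2+q+1$, memory ratio
\[
\frac{M}{N}=\frac{Z}{F}=\frac{q^2}{q^2+q+1}=1-\frac{bg}{v},
\]
and transmission load
\[
R=\frac{S}{F}=\frac{q^2+q+1}{q^2+q+1}=1=b,
\]
matching the statement of the corollary exactly.

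Since every component is already in place, there is essentially no obstacle: the proof is a one-line composition of prior results, and the only point requiring a citation is the existence of the Singer difference set. I would keep the write-up short, stating the chain explicitly and noting that this scheme also recovers the ASK scheme in \cite{ASK} as promised in the preceding numerical-comparison subsection.
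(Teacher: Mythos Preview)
Your proposal is correct and follows exactly the same route as the paper: cite the existence of the Singer $(q^2+q+1,q+1)$ difference set, apply Lemma~\ref{PCDF-NHSDP} to obtain a $(q^2+q+1,q+1,1)$ NHSDP, and then read off the PDA and scheme parameters from Theorem~\ref{th-1} and Lemma~\ref{le-PDA}. The only small addition worth making explicit is that $v=q^2+q+1$ is always odd (since $q(q+1)$ is even), so the oddness hypothesis of Definition~\ref{def-NHSDP} and Theorem~\ref{th-1} is satisfied; note also that your computed $Z=q^2$ is the correct value (the ``$q+1$'' in the corollary's PDA tuple appears to be a typo, since $M/N=q^2/(q^2+q+1)$ forces $Z=q^2$).
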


It is worth noting that the scheme in Corollary~\ref{coro-first-q} has the same system parameters $K$, $M$ and $N$ and the same performance, i.e., the same subpacketization and transmission load, as the ASK scheme in \cite{ASK}. The reason is that the scheme in \cite{ASK} is generated by a symmetric balanced incomplete block design (SBIBD) which is can be constructed by a DS \cite{colbourn2010crc}.

\subsection{Expanding the Concept of NTAP} 
NHSDP includes the well-known non-three-term arithmetic progressions (NTAP) \cite{brown1982density} as a special case. A subset $\mathcal{S}\subseteq\mathbb{Z}_{v}$ is called an NTAP set if any three different elements, say $x$, $y$ and $z$, of $\mathcal{S}$ do not have the property $2z=x+y$. If $\mathcal{S}$ is an NTAP set,  $(\mathbb{Z}_{v},\mathcal{S})$ is an NHSDP with one block.  
So in Theorem~\ref{th:Lagrange}, when $q=3$, we have $b=1$. In this case, \eqref{eq-family-M} can be written as follows.
\begin{align*}
\mathcal{D}=\{\alpha_1+3\alpha_2+\cdots+3^{n-1}\alpha_n\ |\  \alpha_i\in\{-1,1\},i\in[n]\}.
\end{align*} By Theorem~\ref{th-main-2} we know that $\mathcal{D}$ is an NTAP subset of $\mathbb{Z}_{v}$. Then the following result can be obtained.
\begin{lemma}[NTAP set]\rm
\label{le-TAP}
For any positive integer $n$, there exists an NTAP with size of $\rho_1=2^n$ over $\mathbb{Z}_{3^n}$. 
\end{lemma}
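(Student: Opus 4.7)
The plan is to obtain this lemma as a direct specialization of Lemma~\ref{th-main-2} (equivalently Theorem~\ref{th:Lagrange} with $q=3$), by taking $m_1 = m_2 = \cdots = m_n = 1$. In that case Construction~\ref{constr-2} produces a single block (since the number of blocks is $\prod_{i=1}^n m_i = 1$), and that single block is forced by the NHSDP property to be an NTAP set.

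First I would compute the recursion in \eqref{eq-RF} under $m_i = 1$: one gets $f(1)=1$ and $f(i) = 2\sum_{j=1}^{i-1} f(j)+1$, which by induction yields $f(i) = 3^{i-1}$, and hence $x_i = f(i)/m_i = 3^{i-1}$. Next I would check the ambient size constraint: $\phi(1,1,\ldots,1) = \sum_{i=1}^n 3^{i-1} = (3^n-1)/2$, so $2\phi+1 = 3^n$, and therefore $v=3^n$ is the smallest admissible odd modulus (note $3^n$ is odd). Lemma~\ref{th-main-2} then guarantees that the pair $(\mathbb{Z}_{3^n},\{\mathcal{D}\})$ with
\begin{equation*}
\mathcal{D} = \bigl\{\alpha_1 + 3\alpha_2 + 3^2\alpha_3 + \cdots + 3^{n-1}\alpha_n \,\big|\, \alpha_i \in \{-1,+1\},\ i \in [n]\bigr\}
\end{equation*}
is a $(3^n, 2^n, 1)$ NHSDP.

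Finally I would translate the NHSDP property into the NTAP property. Since $\mathfrak{D} = \{\mathcal{D}\}$ has a single block, the first NHSDP condition is vacuous, and the second NHSDP condition states precisely that for any two distinct $x,y \in \mathcal{D}$ their half-sum $(x+y)/2$ does not lie in $\mathcal{D}$; equivalently, no three distinct elements $x,y,z$ of $\mathcal{D}$ satisfy $2z = x+y$, which is the definition of an NTAP set over $\mathbb{Z}_{3^n}$. Counting the elements of $\mathcal{D}$ is immediate: the $n$-tuples $(\alpha_1,\ldots,\alpha_n) \in \{-1,+1\}^n$ produce $2^n$ distinct integers (distinctness is already a byproduct of Lemma~\ref{th-main-2}, which guarantees $|\mathcal{D}| = g = 2^n$). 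This gives the claimed NTAP of size $\rho_1 = 2^n$.

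The only non-trivial step is the reduction to Lemma~\ref{th-main-2}, but since that lemma has already been established, the argument here is essentially a bookkeeping specialization; no new obstacle arises. The main thing to state clearly is the equivalence, in the single-block case, between the second NHSDP condition and the NTAP property, which I would spell out in one sentence for the reader's convenience.
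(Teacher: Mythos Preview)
Your proposal is correct and follows essentially the same route as the paper: specialize the NHSDP construction to $m_1=\cdots=m_n=1$ (equivalently $q=3$ in Theorem~\ref{th:Lagrange}), obtain the single block $\mathcal{D}=\{\alpha_1+3\alpha_2+\cdots+3^{n-1}\alpha_n\}$, and invoke Lemma~\ref{th-main-2} together with the observation that a one-block NHSDP is exactly an NTAP set. Your write-up is in fact slightly more careful than the paper's, since you explicitly verify $f(i)=3^{i-1}$ and the modulus constraint $2\phi+1=3^n$, which the paper leaves implicit.
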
 For any given positive integer $v$ define the maximum size of an NTAP subset of $\mathbb{Z}_{v}$ by $\rho(v)$. The authors in \cite{elsholtz2024improving} showed that $\rho(v) \geq \rho_2=v\cdot 2^{-(2\sqrt{\log_2(24/7)}+o(1))\sqrt{\log_2 v}}$. As far as we know, this is the best lower bound on $\rho(v)$ up to now. 
When $v=3^n$, we have  
\begin{align*}
\frac{\rho_2}{\rho_1}=\frac{v\cdot 2^{-(2\sqrt{\log_2(24/7)}+o(1))\sqrt{\log_2 v}}}{2^n}=\frac{3^n}{2^{(2\sqrt{\log_2(24/7)}+o(1))\sqrt{n\log_2 3}+n}}.
\end{align*}Taking the logarithm of the above ratio, we have 
\begin{align*}
\ln\left(\frac{\rho_2}{\rho_1}\right)=&n\ln 3-((2\sqrt{\log_2(24/7)}+o(1))\sqrt{n\log_2 3}+n)\ln 2\\
=& 1.0986n -0.6931(2.6665+o(1))\cdot 1.5850n^{1/2}+n) \\
=& 1.0986n-  0.6931n-2.9293n^{1/2}-o(1)\cdot1.5850n^{1/2}\\
=&0.4055n-2.9293n^{1/2}-o(1)\cdot1.5850n^{1/2}.
\end{align*}By the above formula, we have $2^n\geq \frac{v\cdot 2^{-(2\sqrt{\log_2(24/7)}+o(1))\sqrt{\log_2 v}}}{2^n}$ when $n\leq \lfloor\left(\frac{2.9293}{0.4055}\right)^2\rfloor=52$,. That is when $n\leq 52$, $\rho_1>\rho_2$ always holds; in other words, by our construction, we propose a tighter lower bound on $\rho(v)$, i.e., $\rho(v)\geq \rho_1$, than the existing one. 

\subsection{Perfect hash family}
\label{sub-PHF} 
 Perfect hash family is a useful combinatorial structure which is generalization of many well-studied objects in combinatorics, cryptography, and coding theory and has many interesting applications, for example, in secure frame-proof codes of fingerprinting problems \cite{stinson2006cryptography}, threshold cryptography \cite{blackburn2023combinatorics,blackburn1996efficient}, covering arrays used in software testing problems \cite{colbourn2006roux}, broadcast encryptions \cite{fiat1994broadcast}, etc. 

Let $\varphi$ be a function from a set $\mathcal{B}$ to a set $\mathcal{C}$. We say that $f$ separates a subset $\mathcal{T}\subseteq \mathcal{B}$ if $\varphi$ is injective when restricted to $\mathcal{T}$. Let $m$, $q$, $t$ be integers such that $m \geq q \geq t \geq 2$.
Suppose $|\mathcal{B}|=m$ and $|\mathcal{C}|=q$. A set $\mathcal{X}$ of functions from $\mathcal{B}$ to $\mathcal{C}$ with $|\mathcal{X}|=r$ is an
$(r;m,q,t)$-perfect hash family if for all $\mathcal{T}\subseteq \mathcal{B}$ with $|\mathcal{T}| = t$, there exists at least one $\varphi\in\mathcal{X}$
such that $\varphi$ separates $\mathcal{T}$. 
We refer to  this as an $(r;m,q,t)$ PHF. An $(r;m,q,t)$ PHF can be depicted as an  $r \times m$ array in which the columns are labeled
by the elements of $\mathcal{B}$, the rows by the functions $\varphi_i\in \mathcal{X}$ and the $(i,j)$-entry of the array
is the value $\varphi_i(j)$. Thus, an $(r;m,q,t)$ PHF is equivalent to an $r \times m$ array with entries from
a set of $q$ symbols such that every $r \times t$ subarray contains at least one row having distinct
symbols.  A perfect hash family is considered optimal if $m$ is as large as possible given $r$, $q$ and $t$.

The authors in \cite{cheng2016bounds,SHF-SG} showed that an NTAP set of size $g$ over $\mathbb{Z}_v$, i.e., $(v,g,1)$ NHSDP, can also be used to construct a $(3;gv,v,3)$ PHF. Using the $(q^2+q+1,q+1,1)$ NHSDP generated by a  $(q^2+q+1,q+1)$ DS for any prime power $q$, we have a can obtain the following new PHF.
\begin{lemma}\rm
\label{le-DS-PHF}
For any prime power $q$, there exists a  $(3;m_1=(q^2+q+1)(q+1),q^2+q+1,3)$ PHF. \hfill $\square$ 
\end{lemma}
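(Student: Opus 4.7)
The plan is to assemble this lemma from three previously established ingredients, none of which requires new technical machinery. First, I would invoke the classical Singer construction: for every prime power $q$, the hyperplanes of the projective plane $PG(2,q)$ give rise to a $(q^2+q+1,q+1)$ cyclic difference set, i.e., a $(v,k)$ CDP with $v=q^2+q+1$ and $k=q+1$ in the sense of Definition~\ref{def-CDP}. Second, I would apply Lemma~\ref{PCDF-NHSDP} to promote this CDP to a $(v,g,1)$ NHSDP with $v=q^2+q+1$ and $g=q+1$. Because the packing has a single block, the discussion immediately preceding Lemma~\ref{le-TAP} says that this NHSDP is exactly an NTAP subset of $\mathbb{Z}_{q^2+q+1}$ of size $q+1$. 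Third, I would invoke the conversion cited from \cite{cheng2016bounds,SHF-SG}, which turns any NTAP set of size $g$ over $\mathbb{Z}_v$ into a $(3;gv,v,3)$ PHF. Plugging in $v=q^2+q+1$ and $g=q+1$ yields a $(3;(q+1)(q^2+q+1),q^2+q+1,3)$ PHF, which is the claim.

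For completeness I would sketch the NTAP-to-PHF step, since it is where all the combinatorial content sits. Given an NTAP set $\mathcal{S}=\{s_0,\ldots,s_{g-1}\}\subseteq\mathbb{Z}_v$, index the columns of a $3\times gv$ array by pairs $(i,a)$ with $i\in[0:g-1]$ and $a\in\mathbb{Z}_v$, and define three hash functions $\varphi_1(i,a)=a$, $\varphi_2(i,a)=a-s_i$, $\varphi_3(i,a)=a+s_i$, each taking values in $\mathbb{Z}_v$. To verify the PHF property, take any three distinct columns $(i_1,a_1),(i_2,a_2),(i_3,a_3)$ and argue by cases. If all three $a$-values are distinct, $\varphi_1$ separates the triple. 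Otherwise at least two of the $a$'s coincide, which forces the corresponding $i$-indices to differ (the columns are distinct). A short case analysis then shows that a collision in $\varphi_2$ on two of the three positions, combined with a collision in $\varphi_3$ on two of them, forces a relation of the form $2s_{i_k}=s_{i_\ell}+s_{i_m}$ for three distinct indices, contradicting the NTAP property of $\mathcal{S}$. Hence at least one of $\varphi_2$ or $\varphi_3$ takes three distinct values on the triple, completing the verification.

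The main obstacle, such as it is, lies in the bookkeeping of this final case analysis: one has to check each pattern of equalities among $(a_1,a_2,a_3)$ and $(i_1,i_2,i_3)$ and identify where the NTAP condition must be used. All genuinely non-trivial mathematics is already packaged in Singer's theorem, Lemma~\ref{PCDF-NHSDP}, and the NTAP-to-PHF conversion of \cite{cheng2016bounds,SHF-SG}, so after that the lemma follows by direct composition of these facts.
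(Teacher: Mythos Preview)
Your proposal is correct and follows exactly the approach the paper takes: the paper simply observes (in the sentence preceding the lemma) that the $(q^2+q+1,q+1)$ Singer difference set yields, via Lemma~\ref{PCDF-NHSDP}, a $(q^2+q+1,q+1,1)$ NHSDP (equivalently an NTAP set), and then invokes the NTAP-to-PHF conversion of \cite{cheng2016bounds,SHF-SG} to obtain the stated $(3;(q+1)(q^2+q+1),q^2+q+1,3)$ PHF. Your write-up adds a helpful explicit sketch of the three hash functions and the separating case analysis, which the paper omits, but the route is the same.
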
  
When $q=3$ in Theorem~\ref{th:Lagrange}, we have the following new PHF.
\begin{lemma}\rm
\label{le-TAP-PHF}
For any positive integer $n$, there exists a $(3;m_2=6^n,v=3^n,3)$ PHF. \hfill $\square$ 
\end{lemma}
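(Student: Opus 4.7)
The plan is to reduce Lemma~\ref{le-TAP-PHF} to a direct composition of two results already available in the paper: the NTAP set produced by Lemma~\ref{le-TAP} and the NTAP-to-PHF transformation cited from \cite{cheng2016bounds, SHF-SG} in the paragraph immediately preceding the statement. No new combinatorics is required; the work is essentially bookkeeping of parameters.

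The first step is to specialize the NHSDP construction to $q=3$. By Lemma~\ref{le-TAP} (which is the $q=3$ case of Theorem~\ref{th:Lagrange}, with a single block $b=1$), there exists an NTAP subset of $\mathbb{Z}_{3^n}$ of size $2^n$, explicitly given by
\begin{equation*}
\mathcal{D} = \{\alpha_1 + 3\alpha_2 + \cdots + 3^{n-1}\alpha_n \ |\ \alpha_i \in \{-1, 1\},\ i\in[n]\}.
\end{equation*}
The NTAP property of $\mathcal{D}$ follows from Theorem~\ref{th-main-2} together with the equivalence between a $(v,g,1)$ NHSDP and an NTAP set of size $g$ over $\mathbb{Z}_v$ noted in Section~\ref{subsect-CDP} and at the start of Section~\ref{sub-PHF}.

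The second step is to invoke the NTAP-to-PHF construction of \cite{cheng2016bounds, SHF-SG} recalled in the text: any NTAP set of size $g$ inside $\mathbb{Z}_v$ produces a $(3;\, gv,\, v,\, 3)$ PHF. Plugging in $g=2^n$ and $v=3^n$ yields $gv = 6^n$, so the resulting PHF has parameters $(3;\, 6^n,\, 3^n,\, 3)$, which is exactly the statement of Lemma~\ref{le-TAP-PHF}.

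There is effectively no obstacle to overcome — the proof is a two-step chaining of established facts. The only point requiring a moment of care is confirming that the cited NTAP-to-PHF construction operates directly on subsets of $\mathbb{Z}_v$ (and not of $[v]$ or $\mathbb{Z}$), so that the three hash rows are well-defined modulo $v = 3^n$; this is automatic because Lemma~\ref{le-TAP} already delivers $\mathcal{D}$ as a subset of $\mathbb{Z}_{3^n}$, and all arithmetic in the transformation is performed in that same ring.
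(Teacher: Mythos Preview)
Your proposal is correct and follows precisely the paper's own argument: the paper states the lemma immediately after noting ``When $q=3$ in Theorem~\ref{th:Lagrange}, we have the following new PHF,'' relying on Lemma~\ref{le-TAP} for the NTAP set of size $2^n$ in $\mathbb{Z}_{3^n}$ and on the cited NTAP-to-PHF transformation from \cite{cheng2016bounds,SHF-SG} to obtain the $(3;6^n,3^n,3)$ PHF. There is nothing to add.
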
 
By the classic generalized quadrangles, quadrics in PG$(4,p)$, and Hermitian varieties in PG$(4,p^2)$ for any prime power $p$, the authors in \cite{Hara-PHF}  constructed the following PHFs which has maximum $m$ among the existing  deterministic constructions.
\begin{lemma}[\cite{Hara-PHF}]\rm
\label{le-PHF-PG}
For any prime power $p$, there exist a $(3;m_3=p^2(p+1),v_1=p^2,3)$ quadrics PHF and a $(3;m_4=p^5,v_2=p^3,3)$ Hermitian PHF.
\end{lemma}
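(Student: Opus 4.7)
The plan is to mirror the geometric constructions underlying \cite{Hara-PHF}. For the quadrics PHF, I would take a nondegenerate quadric $Q$ in $\mathrm{PG}(4,p)$, select a suitable affine open subset of its points of size exactly $p^2(p+1)$, and define three hash functions as projections from three carefully chosen centers onto three affine planes, each of size $p^2$. For the Hermitian PHF, the approach is analogous: take the Hermitian variety $H(4,p^2)$ in $\mathrm{PG}(4,p^2)$, pick $p^5$ points from an affine chart, and define three projections onto affine $3$-spaces, each of size $p^3$.

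The key technical step is verifying the $3$-separating property. In the quadric case, given any three distinct points $P_1,P_2,P_3$ of $Q$ in the chosen affine chart, I would argue that at least one of the three projections sends them to three distinct images. The central fact is that a projective line meets $Q$ in at most two points unless it lies entirely on $Q$ as a generator. So if two of the projections both collapse some pair, the relevant ``fiber lines'' must be generators; choosing the three projection centers in sufficiently general position with respect to the generator families of $Q$ rules out the possibility that all three projections simultaneously fail to separate $\{P_1,P_2,P_3\}$. The Hermitian case follows the same strategy, using that a line in $\mathrm{PG}(4,p^2)$ meets $H(4,p^2)$ in $1$, $p+1$, or $p^2+1$ points, with the last possibility occurring only when the line is fully contained in the variety.

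The main obstacle will be matching the counts exactly: showing that the affine chart of $Q$ has precisely $p^2(p+1)$ points, that the corresponding affine chart of $H(4,p^2)$ has precisely $p^5$ points, and then verifying that the three projections can be chosen ``generically enough'' so that the no-bad-triple argument closes. This requires the detailed combinatorics of classical polar spaces (parameters of generators, ovoids and spreads, and the incidence of tangent hyperplanes with the variety) worked out in \cite{Hara-PHF}; essentially, one needs to track, for each pair of points, exactly how many third points lie in the common fiber under a given projection, and then show by a union bound that choosing three projections in general position kills all potential bad triples.
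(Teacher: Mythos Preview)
The paper does not prove this lemma at all: it is stated as a citation of existing results from \cite{Hara-PHF}, with no proof or sketch given. The only surrounding text is the sentence ``By the classic generalized quadrangles, quadrics in PG$(4,p)$, and Hermitian varieties in PG$(4,p^2)$ for any prime power $p$, the authors in \cite{Hara-PHF} constructed the following PHFs\ldots''. So there is nothing to compare your proposal against; the intended ``proof'' here is simply a reference.

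Your sketch is a plausible high-level description of the kind of argument one finds in the PHF-from-polar-spaces literature, but as written it is not a proof. The real content lies exactly in the steps you label as obstacles: the precise choice of projection centers, the case analysis on how lines meet $Q$ or $H(4,p^2)$, and the verification that three projections in the stated positions always separate every triple. A union bound over ``generic'' choices is not how these constructions actually work; the three projections are chosen explicitly (typically via the structure of the generalized quadrangle) and the separating property is proved deterministically, not probabilistically. If you intend to include a self-contained proof rather than a citation, you would need to reproduce those explicit choices and the accompanying incidence arguments from \cite{Hara-PHF}.
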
 

By Lemma~\ref{le-DS-PHF} and Lemma~\ref{le-PHF-PG}, we have $m_1\approx m_3$ when $p=q$. Now let us compare the values of $m_2$, $m_3$ and $m_4$ in Lemma~\ref{le-TAP-PHF} and Lemma~\ref{le-PHF-PG}. Let $v=p^2=3^n$ and $v=p^3=3^n$ respectively where $n\geq 2$. We have
\begin{align*}
\frac{m_3}{m_2}=\frac{3^n(3^{n/2}+1)}{6^n}=\left(\frac{3}{4}\right)^{n/2}+\frac{1}{2^n}<1  \ \ \text{and}\ \ \
\frac{m_4}{m_2}=\frac{3^{5n/3}}{6^n}=\frac{3^{2n/3}}{2^n}\approx 1.04^n.
\end{align*}This implies that the PHF proposed in Lemma~\ref{le-TAP-PHF} achieves more columns than the first quadrics PHF and approaches the number of column the Hermitian PHF in \cite{Hara-PHF}.

\section{Conclusion}\label{sec-conclu}
Motivated by the quest to construct coded caching schemes with linear subpacketization and small load, in this work we have introduced a new combinatorial structure called non-half-sum disjoint packing (NHSDP).
This structure is used to construct PDAs with linear subpacketization in the number of users K.  
According to theoretical and numerical comparisons, 
 the proposed scheme achieves lower load than the existing schemes with linear subpacketization; it achieves lower load in some cases than some existing schemes with polynomial subpacketization; it has close loads as some existing schemes with exponential subpacketization in some cases. 
 Furthermore, NHSDP has a close relationship with other classical combinatorial structures such as cyclic difference packing, non-three-term arithmetic progressions, and perfect hash family (PHP). By constructing NHSDPs, we can obtain some new CDPs, NTAPs and PHPs with large cardinalities or columns. 

In this paper, we have focused only on $(v,g,b)$ NHSDPs with $g = 2^n$ for positive integers $n$. In general, we point out as a promising future work to consider also more general NHSDPs with different choice of $g$. 

%
%

\appendices
\section{Proof of Theorem~\ref{th-1}}
\label{sec:proof-th-1}
Assume that  $(\mathbb{Z}_v,\mathfrak{D})$ is a $(v,g,b)$ NHSDP where $\mathfrak{D}=\{\mathcal{D}_1,\mathcal{D}_2,\ldots,\mathcal{D}_b\}$. According to Definition~\ref{def-PDA}, let us compute the number of stars in each column first. For each column $k$ and each each integer $x\in \mathfrak{D}$, there exactly exists a unique integer $f$ satisfying that $k-f=x$. This implies that there are exactly $bg$ integer entries in column $k$, i.e., $v-bg$ star entries in column $k$. So we have $Z=v-bg$. 

Next, for any two different entries say $p_{f_1,k_1}$ and $p_{f_2,k_2}$ satisfying that $ p_{f_1,k_1}=p_{f_2,k_2}=(c,i)$, from \eqref{eq-cons-1} we have 
\begin{align}
\label{eq-equal}d_1=k_1-f_1,\ d_2=k_2-f_2\in \mathcal{D}_i, \ \  \text{and}\ \ k_1+f_1=k_2+f_2. 
\end{align}If $f_1=f_2$ (or $k_1=k_2$) we have $k_1=k_2$ (or $f_1=f_2$) which contradicts our hypothesis that  $p_{f_1,k_1}$ and $p_{f_2,k_2}$ are two different entries. So each integer pair in $\mathbf{P}$ occurs in each row and each column at most once.
Let us consider the case $f_1\neq f_2$ and $k_1\neq k_2$. First from \eqref{eq-equal}, we have $k_1=f_1+d_1, k_2=f_2+d_2$ and\begin{equation}\label{eq-k1k2}
2f_1+d_1=2f_2+d_2,\ \ \ \ \text{i.e.,}\ \ \ \ d_1-d_2=2(f_2-f_1).
\end{equation}
Assume that the entry $p_{f_1,k_2}$ is not star. Then  from \eqref{eq-equal} there exists a integer $i'\in[b]$ and $d_3\in \mathbb{Z}_v$ such that $d_3=k_2-f_1\in\mathcal{D}_{i'}$. This implies that $k_2=d_3+f_1$. Together with $k_2=d_2+f_2$ we have $d_3-d_2=f_2-f_1$. From \eqref{eq-k1k2} we have $2(d_3-d_2)=d_1-d_2$, i.e., $2d_3=d_1+d_2$. This implies that $d_3=\frac{d_1+d_2}{2}\in\mathcal{D}_{i'}$ when $v$ is odd. This contracts the condition of NHSDP that the half-sum of any two different elements in $\mathcal{D}\in \mathfrak{D}$ dose not occur in $\mathfrak{D}$. Similarly we can also show that $p_{f_2,k_1}=*$.

Finally let us compute the number of different integer pairs in $\mathbf{P}$. For any integers $i\in [b]$, $c\in\mathbb{Z}_v$ and for each integer $d\in\mathcal{D}_i$, when $v$ is odd, the following system of equations always has a unique solution:
\begin{align*}
\left\{
\begin{array}{c}
k-f=d;\\
k+f=c. 
\end{array}
\right.
\end{align*}When $d$ runs all the element of $\mathcal{D}_i$, there exactly $g$ unique solutions. This means that the integer pair $(c,i)$ occurs in $\mathbf{P}$ exactly $g$ times and there are exactly $bv$ different integer pairs. Then the proof is completed.

\section{Proof of Lemma~\ref{th-main-2}}
\label{sec:proof of thm2}
By Definition~\ref{def-NHSDP} let us consider the first condition that the intersection of any two difference blocks in $\mathfrak{D}$ is empty. Let ${\bf a}=(a_1,a_2,\ldots,a_n)$, ${\bf a}'=(a'_1,a'_2,\ldots,a'_n)$, ${\bf \alpha}=(\alpha_1,\alpha_2,\ldots,\alpha_n)$, and  ${\bf\alpha}'=(\alpha'_1,\alpha'_2, \ldots,\alpha'_n)$ are four vectors, where ${\bf a}, {\bf a} '\in \mathcal{A}$ and ${\bf\alpha},{\bf\alpha}' \in \{1,-1\}^n$.  From \eqref{eq-family} we have two integers of $\mathfrak{D}$, i.e., 
\begin{align}
\label{eq-x-y}
x=\alpha_1a_1x_1+\cdots+\alpha_na_nx_n\in \mathcal{D}_{{\bf a}}\ \ \text{and}\ \  y=\alpha_1'a_1'x_1+\cdots+\alpha_n'a_n'x_n\in \mathcal{D}_{{\bf a}'}.
\end{align} Since 
$v\geq 2\phi(m_1,m_2,\ldots,m_n)+1=2\sum_{i=1}^{n}f(i)+1$, both $\sum_{i=1}^{n}a_ix_i$ and  $\sum_{i=1}^{n}a'_ix_i$ are less than $\frac{v-1}{2}$. In addition, from \eqref{eq-RF} for each $i\in[n-1]$ we have $x_{i+1}>2\sum_{j=1}^{i}f(j)$ which implies that
\begin{align}
\label{eq-relation}
a_{i+1}x_{i+1}>2\sum_{j=1}^{i}a_jx_j.
\end{align} If $x=y$, we have $\alpha_n=\alpha'_n$ and $a_n=a'_n$. Otherwise, from \eqref{eq-relation} with $i=n-1$, if $a_n\neq a'_n$ we have $x\neq y$. Furthermore, if $\alpha_n\neq \alpha'_n$, without loss of generality, we assume that $\alpha_n<0$ and $\alpha'_n>0$. Then we have $\frac{v-1}{2}< x<v$ and $y\leq \frac{v-1}{2}$, which implies $x\neq y$. This contradicts our hypothesis that $x=y$. So we only need to consider the case $x-\alpha_na_nx_n=y-\alpha'_na'_nx_n$. Similarly we can obtain $a_{n-1}=a'_{n-1}$ and $\alpha_{n-1}=\alpha'_{n-1}$ from \eqref{eq-relation} with $i=n-2$.
Using the aforementioned proof method, we can analogously obtain $a_i=a'_i$ and $\alpha_i=\alpha'_i$, which implies  $\alpha_ia_ix_i=\alpha'_ia'_ix_i$ for each integer $i\in[n]$. Then ${\bf a}={\bf a}'$ and ${\bf \alpha}={\bf \alpha}'$. So each integer in $\mathfrak{D}$ appears exactly once.

Now let us check the property of non-half-sum. For any vector ${\bf a}=(a_1,a_2,\ldots,a_n)\in \mathcal{A}$ and any two different vectors ${\bf \alpha}=(\alpha_1,\alpha_2,
\ldots,\alpha_n)$, ${\bf \alpha}'=(\alpha'_1,\alpha'_2, \ldots,\alpha'_n)\in\{-1,1\}^n$, let us consider the half-sum of integers  
\begin{align}
\label{eq-x-y-h}
x=\alpha_1a_1x_1+\cdots+\alpha_na_nx_n\ \ \text{and}\ \  y=\alpha'_1a_1x_1+\cdots+\alpha'_na_nx_n,
\end{align}i.e., $$\frac{x+y}{2}=\frac{\alpha_1+\alpha'_1}{2}\cdot a_1x_1+\cdots+\frac{\alpha_n+\alpha'_n}{2}\cdot a_nx_n.$$  By our hypothesis ${\bf \alpha}\neq {\bf \alpha}'$ we have $x\neq y$. In addition, since $\alpha_1,\alpha_2\in \{-1,1\}$ it follows that $\frac{\alpha_i+\alpha'_i}{2}\in\{-1,0,1\}$ for each $i\in[n]$. Furthermore there exists at least one integer $i'\in[n]$ such that $\frac{\alpha_i+\alpha'_i}{2}=0$. Otherwise we have ${\bf \alpha}={\bf \alpha}'$ which implies $x=y$. This contradicts our hypothesis $x\neq y$. In the following we will show that $\frac{x+y}{2}$ does not occur in $\mathfrak{D}$. Assume that there exists two vectors ${\bf a}'=(a'_1,a'_2,\ldots,a'_n)\in \mathcal{A}$ and ${\bf \beta}=(\beta_1,\beta_2,\ldots,\beta_n)\in\{-1,1\}^n$ such that  
\begin{align*}
z=a'_1\beta_1x_1+\cdots+a'_n\beta_nx_n=\frac{\alpha_1+\alpha'_1}{2}\cdot a_1x_1+\cdots+\frac{\alpha_n+\alpha'_n}{2}\cdot a_nx_n.
\end{align*}Similar to the proof of the uniqueness of each integer in $\mathfrak{D}$ introduced above, we can get $\beta_i=\frac{\alpha_i+\alpha'_i}{2}$ and $a'_i=a_i$ for every $i\in [n]$. Then there exists at  least one $\beta_{i'}=0$ for some $i'\in[n]$. This contradicts our definition rule given in \eqref{eq-family}, that is, $\beta_{i'} \in \{-1,1\}^n$. So the half-sum of any two different integers in  $\mathfrak{D}$ does not occur in  $\mathfrak{D}$. Then the proof is completed.

\section{Proof of Theorem~\ref{th:Lagrange}}
\label{sec:lagrange}

Clearly this is a convex $n$-th programming problem under the real number domain when $m_1$, $m_2$, $\ldots$, $m_n$ are real variables. 
We can use the Lagrange Multiplier Method to find its optimal solution. Specifically 
by partially differentiating of the function
\begin{align*}
\psi(m_1,\ldots,m_{n})= \prod_{i=1}^{n}m_{i}+\lambda\left(\frac{v-1}{2}-\sum_{i=1}^{n-1}\left(m_i\prod_{j=i+1}^{n}(1+2m_{i})\right)- m_{n}\right).
\end{align*}
With respect to the variable $m_1$, we have  
\begin{align*}
\frac{\partial \psi(m_1,\ldots,m_{n})}{\partial m_1} = \prod_{i=2}^{n} m_i - \lambda \prod_{i=2}^{n}(1 + 2m_i) = 0,
\end{align*}which implies 
\begin{align}
\label{eq-differ-1}
1=\lambda\prod_{i=2}^{n}\left( \frac{1}{m_i}+2\right).
\end{align}Similarly by partially differentiating of $\psi(m_1,\ldots,m_{n})$ with respect to the variable $m_2$, we have 
\begin{align}
\label{eq-differ-2}
\frac{\partial \psi(m_1,\ldots,m_{n})}{\partial m_2} = m_1 \prod_{i=3}^{n} m_i - \lambda \left( 2m_1 \prod_{i=3}^{n}(1 + 2m_i) + \prod_{i=3}^{n}(1 + 2m_i) \right) = 0.
\end{align}Substituting \eqref{eq-differ-1} into \eqref{eq-differ-2} and rearranging it, we have $(2m_2+1)m_1=m_2(2m_1+1)$ which implies 
$m_1=m_2$. Similarly by partially differentiating of $\psi(m_1,\ldots,m_{n})$ with respect to the variable $m_3$, we have 
\begin{align}
\label{eq-differ-3}
\frac{\partial \psi(m_1,\ldots,m_{n})}{\partial m_3}=& m_1m_2 \prod_{i=4}^{n} m_i -\\ 
& \lambda \left( 2m_1(1+2m_2)\prod_{i=4}^{n}(1 + 2m_i) +2m_2\prod_{i=4}^{n}(1 + 2m_i)+\prod_{i=4}^{n}(1 + 2m_i) \right) = 0.\nonumber
\end{align}Substituting \eqref{eq-differ-1} and $m_1=m_2$ into \eqref{eq-differ-3} and rearranging it, we have $(2m_3+1)m_1=m_3(2m_1+1)$ which implies $m_3=m_1$. 

Now let us use  induction to prove that $m_1=m_2=\cdots=m_{n}$. Assume that we have $m_1=m_2=\cdots=m_{k}$ where $3\leq k<n-1$. Now we will show $m_{k+1}=m_1$ by partially differentiating of $\psi(m_1,\ldots,m_{n})$ with respect to the variable $m_{k+1}$, we have 
\begin{align}
\label{eq-differ-k}
\frac{\partial \psi(m_1,\ldots,m_{n})}{\partial m_{k+1}}=&\prod_{i\in [n]\setminus\{k+1\}} m_i-\lambda 
\left(2 \sum_{j=1}^{k-1}\left(m_j\prod_{h\in [j+1:n]\setminus\{k+1\}}(1+2m_{h})\right)+\right.
\\
&\left.2m_k\prod_{i=k+2}^{n}(1 + 2m_i)+
\prod_{i=k+2}^{n}(1 + 2m_i)
\right) = 0.\nonumber
\end{align} Submitting the results $m_1=m_2=\cdots=m_k$ into \eqref{eq-differ-k},  we have
\begin{align}
\label{eq-differ-k'}
m_1^{k}\prod_{j=k+2}^{n} m_j=&\lambda\prod_{i=k+2}^{n}(1 + 2m_i) 
\left(2m_1\sum_{j=0}^{k-1}(1+2m_1)^j +1
\right)\nonumber\\ 
=&(1+2m_1)^{k}\lambda\prod_{i=k+2}^{n}(1 + 2m_i).
\end{align} In addition when $m_1=m_2=\cdots=m_k$, \eqref{eq-differ-1} can be written as follows.
\begin{align}
\label{eq-differ-1'}
\frac{1}{\left(\frac{1}{m_1}+2\right)^{k-1}\left( \frac{1}{m_{k+1}}+2\right)}=\lambda\prod_{i=k+2}^{n}\left(\frac{1}{m_i}+2\right).
\end{align}Submitting the results \eqref{eq-differ-1'} into \eqref{eq-differ-k'},  we have
\begin{align*} 
&m_1^{k}=(1+2m_1)^{k}\frac{1}{\left(\frac{1}{m_1}+2\right)^{k-1}\left( \frac{1}{m_{k+1}}+2\right)}\\
\Longleftrightarrow\ \ \ \ & 
m_{k+1}\left(1+2m_1\right)=m_{k+1}(1+2m_1)\Longleftrightarrow m_{k+1}=m_1.
\end{align*} When $k=n$ we can also show that our statement holds by the above method. 

By the above discussion, we have the optimal solution $m_1=m_2=\cdots=m_{n}$ of the optimization problem \eqref{eq-optimization}. From \eqref{eq-sum-half} and \eqref{eq-optimization} we have
\begin{align*}
\sum_{i=1}^{n-1}\left(m_i\prod_{j=i+1}^{n}(1+2m_{i})\right)+ m_{n}=m_1\sum_{i=0}^{n-1}(1+2m_1)^i=\frac{(1+2m_1)^n-1}{2}\leq \frac{v-1}{2},
\end{align*}which implies $m_1=m_2=\cdots=m_n\leq \frac{\sqrt[n]{v}-1}{2}$. So the total number of blocks of NHSDP is $b\leq (\frac{\sqrt[n]{v}-1}{2})^n$ and the memory ratio is at least $1-\frac{(\sqrt[n]{v}-1)^n}{v}$. Moreover, when  $\sqrt[n]{v}:=q$ for some odd integer $q\geq 3$, \eqref{eq-family} can be written as follows.
\begin{align}
\label{eq-family-M}
\mathcal{D}_{\bf a}=\left\{\sum_{i=1}^{n}\alpha_ia_iq^{i-1}\ \Big|\  \alpha_i\in\{-1,1\},i\in[n]\right\}, \ \ \forall {\bf a}=(a_1,a_2,\ldots,a_n)\in [q]^n.
\end{align}

\bibliographystyle{IEEEtran}
\bibliography{reference}
\end{document}